\definecolor{purple}{RGB}{128,0,128}
\definecolor{ultramarine}{RGB}{63, 0, 255}
\definecolor{medblue}{RGB}{0, 0, 100}
\definecolor{googleblue}{RGB}{34, 0, 204}
\definecolor{panblue}{RGB}{0,24,150}
\definecolor{carmine}{RGB}{150, 0, 24}
\definecolor{gray}{RGB}{150, 150, 150}
\definecolor{darkgreen}{RGB}{0, 80, 0}
\newtheorem{thm}{Theorem}
\newtheorem{thmrepeat}{Theorem}
\newtheorem{lemma}{Lemma}[thm]
\newtheorem{cor}{Corollary}[thm]
\newtheorem*{conj}{Conjecture}
\newtheoremstyle{defblock}{0.7\topsep}{0pt}{}{}{\bfseries}{: }{0pt plus 1pt minus 1pt}{\thmname{\bfseries{#1}}\thmnumber{\bfseries{#2}}\thmnote{#3}}
\theoremstyle{defblock}
\newtheorem*{defn}{}
\theoremstyle{definition}
\newtheorem{ex}{Example}
\newtheorem*{inductionhypothesis}{Induction hypothesis (IH$_{\text{m}}$)}
\newtheorem*{basecase}{Base case}
\newtheorem*{inductivestep}{Inductive step}
\theoremstyle{remark}
\newtheoremstyle{claim}{0.7\topsep}{0pt}{}{}{\bfseries}{: }{0pt plus 1pt minus 1pt}{\thmname{#1} \thmnumber{{#2}}\thmnote{\normalfont#3}}
\theoremstyle{claim}
\newtheorem{claim}{Claim}[lemma]
\newtheoremstyle{block}{0.7\topsep}{0pt}{\em}{}{\bfseries}{: }{0pt}{\thmnote{\bfseries#3}\thmnumber{{#2}}}
\theoremstyle{block}
\newtheorem*{block}{}
\newcommand{\M}{\mathcal{M}}
\newcommand{\x}{\mathbf{x}}
\newcommand{\Q}{\mathcal{Q}}
\renewcommand{\C}{\mathcal{C}}
\newcommand{\CE}{\mathcal{CE}}
\renewcommand{\G}{\mathcal{G}}
\newcommand{\I}{\mathcal{I}}
\newcommand{\cH}{\mathcal{H}}
\newcommand{\Id}{\mathbbm{1}}
\newcommand{\cP}{\mathcal{P}}
\newcommand{\cB}{\mathcal{B}}
\newcommand{\cD}{\mathcal{D}}
\newcommand{\ev}[1]{\langle #1 \rangle}
\newcommand{\cS}{\mathscr{S}}
\newcommand{\scalemath}[2]{\scalebox{#1}{\mbox{\ensuremath{\displaystyle #2}}}}
\renewcommand{\tfrac}[2]{\vcenter{\hbox{$\genfrac{}{}{}{1}{#1}{#2}$}}}
\newcommand{\subscript}[1]{$_{\text{#1}}$}
\newcommand{\tr}[1]{\mathrm{tr} \left[ #1 \right]}
\newcommand{\hyphenationsetting}{%
	\emergencystretch=0pt 
	\tolerance=2000 
	\pretolerance=1000 
	\righthyphenmin=4 
	\lefthyphenmin=4 
	}
\newcommand{\setmuskip}[2]{#1=#2\relax}
\newcommand{\narrowmath}[1]{%
	\begingroup%
		\setmuskip{\thinmuskip}{3mu}%
		\setmuskip{\medmuskip}{1.5mu}%
		\setmuskip{\thickmuskip}{5mu}%
		\ensuremath{#1}%
	\endgroup 
	\hyphenationsetting} 
\newcommand{\bel}[1]{{\color{red!70!blue} #1}}
\let\OLDthebibliography\thebibliography
\renewcommand\thebibliography[1]{
	\OLDthebibliography{#1}
	\setlength{\parskip}{0pt}
	\setlength{\itemsep}{0pt plus 0.3ex}
	}
\newenvironment{sloppypar*}{\sloppy\ignorespaces}{\par}
\DeclareFontFamily{OT1}{cmrx}{}
\DeclareFontShape{OT1}{cmrx}{m}{n}{<->cmr10}{}
\renewcommand*{\Longrightarrow}{\mathrel{\mbox{\fontfamily{cmrx}\fontencoding{OT1}\selectfont=}}\joinrel\Rightarrow}
\begin{document}

\title{Almost Quantum Correlations are Inconsistent with Specker's Principle}
\author{Tomáš~Gonda}
\affiliation{Perimeter Institute for Theoretical Physics, Waterloo, Ontario, Canada, N2L 2Y5}
\affiliation{Dept.~of Physics and Astronomy, University of Waterloo, Waterloo, Ontario, Canada, N2L 3G1}
\email{tgonda@perimeterinstitute.ca}
\author{Ravi~Kunjwal}
\affiliation{Perimeter Institute for Theoretical Physics, Waterloo, Ontario, Canada, N2L 2Y5}
\author{David~Schmid}
\affiliation{Perimeter Institute for Theoretical Physics, Waterloo, Ontario, Canada, N2L 2Y5}
\affiliation{Dept.~of Physics and Astronomy, University of Waterloo, Waterloo, Ontario, Canada, N2L 3G1}
\author{Elie~Wolfe}
\author{Ana~Belén~Sainz}
\affiliation{Perimeter Institute for Theoretical Physics, Waterloo, Ontario, Canada, N2L 2Y5}

\date{\today}

\begin{abstract}
	\noindent Ernst Specker considered a particular feature of quantum theory to be especially fundamental, namely that pairwise joint measurability of sharp measurements implies their global joint measurability [\href[pdfnewwindow]{https://vimeo.com/52923835}{vimeo.com/52923835 (2009)}]. To date, Specker's principle seemed incapable of singling out quantum theory from the space of all general probabilistic theories. In particular, its well-known consequence for experimental statistics, the principle of consistent exclusivity, does not rule {out the set of correlations known as \textit{almost quantum}, which is} strictly larger than the set of quantum correlations. Here we show that, contrary to the popular belief, Specker's principle \emph{cannot} be satisfied in any theory that yields almost quantum correlations.
\end{abstract}

\maketitle

\section{Introduction}
	\label{sec:Introduction}
	
	The advent of quantum theory was accompanied by many conceptual controversies over the failure of intuitions from classical physics, e.g.~the existence of wave-particle duality, the fundamental indeterminism apparent from the Born rule, and the nonseparability epitomized by entanglement, as pointed out by Einstein, Podolsky and Rosen~\cite{solvay, EPR}. More recently, we have witnessed the emergence of quantum information theory and a surge of interest in quantum foundations. As a consequence, there has been remarkable progress in proving theorems concerning ways in which Nature fails to be classical, given a well-defined notion of classicality that mathematically formalizes some intuition from classical physics.
	
	This attitude can also be taken towards quantum theory. Are there ways in which Nature may fail to be quantum? That is, are there deviations from quantum theory in Nature, and if so, where should one look for them~\cite{GPTexpt}? Recently, there has been much effort to identify the physical properties that single out the quantum world from the space of hypothetical alternatives. By doing this, not only can we learn more about quantum theory itself, but we also gain insight about where one might or might not hope to find failures of quantum theory.
	
	In the past two decades, there have been two major lines of research tackling this problem. On the one hand, there is the program of characterizing only the \emph{statistical} aspect of quantum theory, i.e.~recovering quantum correlations. On the other hand, there is that of deriving the \emph{structure} of quantum theory from a set of simple axioms.
	
	Statistical aspects of quantum theory come into play when exploring phenomena such as Bell nonlocality~\cite{Bell64} and Kochen-Specker (KS) contextuality~\cite{KS}. These cannot be explained by a classical model of the world, although they arise naturally within quantum theory. Despite being physically distinct \cite{Spe05}, the notions of classicality challenged by these phenomena are mathematically similar, which makes it possible to study them in a unified manner \cite{AB,AFLS}. 
	
	Tackling the statistical aspects of quantum theory has provided us with considerable progress in articulating principles that are satisfied by quantum correlations. However, a key question remains. Are there any principles that \emph{uniquely} identify the set of quantum correlations in the space of all conceivable correlations? The study of this question led to the conception of a set known as almost quantum correlations~\cite{AQ}. Initially defined only within Bell scenarios~\cite{AQ}, almost quantum correlations have also been subsequently defined within general (KS-)contextuality scenarios~\cite{AFLS}. This set of correlations strictly contains the quantum set, yet has the following remarkable property: Almost quantum correlations are consistent with all the principles that to our knowledge have so far been proposed to characterize quantum correlations\footnote{Strictly speaking, almost quantum correlations have not been proven to satisfy the principle of Information Causality yet; only numerical evidence exists for that claim.}~\cite{AQ}. In particular, almost quantum correlations satisfy the principle of consistent exclusivity (CE) \cite{AFLS}, which we define at the end of section~\ref{se:probmod}. Therefore, it is desirable to identify a principle capable of discriminating quantum from almost quantum correlations. 
	
	The other approach deals with structural aspects of quantum theory. This has proven more successful than the previous approach {since, to date, there are already} derivations of quantum theory from a set of reasonably simple axioms~\cite{AxioLucien, AxioBook2, AxioMarkus, AxioLluis, Muller2016, Chiribella2016, Hardy2016, Dakic2016, AxioOeckl, hoehn1, hoehn2}. Hence, lately, there has been increased interest in how this successful approach could be used to address the aforementioned issues with almost quantum correlations. The hope is to clarify the relevant structural differences between {quantum theory and} physical theories capable of generating almost quantum correlations. For example, recent results show that any general probabilistic theory~\cite{Barrett} giving rise to almost quantum correlations would have to violate the no-restriction hypothesis~\cite{giulionoR, noR}. 
	
	\bigskip
	
	In this paper, we explore the structural aspects of a hypothetical almost quantum theory by investigating its statistical aspects---almost quantum correlations. To this end, we use the connections between two frameworks for describing contextuality scenarios: in terms of compatible measurements \cite{AB}, and in terms of operational equivalences among measurement events \cite{AFLS}. First, in section \ref{sec:2}, we review these two formalisms and their relation.	Section \ref{sec:3} then presents Specker's principle and some of its consequences both for the structure of a theory and the statistics arising from it. In particular, these include a family of sufficiency properties which we define therein. In section \ref{sec:4}, we show that the ramifications of one of these sufficiency properties for measurement statistics are violated by almost quantum correlations. Thus, Specker's principle cannot be satisfied in any theory yielding almost quantum correlations. Finally, section~\ref{sec:5} provides evidence that not all of the sufficiency properties are in conflict with almost quantum correlations, demonstrating the subtle nature of the contradiction between these and Specker's principle.

\section{Marginal Scenarios and the Event-Based Hypergraph Approach to Contextuality}
	\label{sec:2}
	
	Contextuality has been studied in a number of different formalisms~\cite{AB, CSW, AFLS}. The two that we are interested in here are traditional contextuality scenarios defined from compatibility structures of measurements, as formalised by Abramsky and Brandenburger in \cite{AB}, and the more general approach\footnote{For the particular contextuality scenarios considered in this work, there is no extra generality in the hypergraph formalism of \cite{AFLS} compared with the sheaf-theoretic formalism of \cite{AB}.} by Ac\'in, Fritz, Leverrier and Sainz~\cite{AFLS}. In this section, we lay out the key concepts of both formalisms, their connections, and important sets of {correlations (a.k.a.~probabilistic models).} 
	We begin by making the notion of compatible measurements explicit.
	
	\subsection{Compatibility of Measurements}
		\label{sec:2-Compatibility}
		
		\begin{defn}[Compatible measurements]{\cite{notesonjm,LSW}}\label{def:CompMeas}\\
			Let $\{A_1, \ldots, A_n\}$ be a set of $n$ measurements with the corresponding outcome sets being $\{O_1, \ldots, O_n\}$. We say that the elements of the set $\{A_1, \ldots, A_n\}$ are \emph{compatible} (or jointly measurable) if there exists a \mbox{measurement $A^{\prime}$} with outcome set $O_1 \times \ldots \times O_n$, such that all the measurements in the set $\{A_1, \ldots, A_n\}$ can be simulated by post-processing the outcomes of $A^{\prime}$. That is, the statistics of $\{A_1, \ldots, A_n\}$ for every preparation\footnote{Although we denote a preparation with the symbol $\rho$, we are not assuming here that the preparation is a \emph{quantum} state.} $\rho$ can be reconstructed perfectly by measuring the given preparation $\rho$ with $A^{\prime}$:
			\begin{align}
				\operatorname{Pr}\,(a_i \!\mid\! A_i,\rho) = \sum\limits_{\alpha\setminus a_i} \operatorname{Pr}\,(\alpha \!\mid\! A^{\prime},\rho)	\qquad \forall\, \rho, \; \forall\,i\in\{1, \ldots, n\},
			\end{align}
			where $\alpha \in O_1 \times \ldots \times O_n$, and $a_i \in O_i$ is fixed in the sum on the right hand side.
		\end{defn}
		
		\bigskip
		
		As an example, consider two compatible measurements $A_1$ and $A_2$ in quantum theory.
		\begin{align}
			A_1 = \left\{A_1^0, A_1^1 \right\} &:= \left\{\tfrac{1}{2} \left( \mathbbm{1} + \tfrac{1}{\sqrt{2}} \, \sigma_x \right), \tfrac{1}{2} \left( \mathbbm{1} - \tfrac{1}{\sqrt{2}} \, \sigma_x \right) \right\} , \\
			A_2 = \left\{A_2^0, A_2^1 \right\} &:= \left\{\tfrac{1}{2} \left( \mathbbm{1} + \tfrac{1}{\sqrt{2}} \, \sigma_z \right), \tfrac{1}{2} \left( \mathbbm{1} - \tfrac{1}{\sqrt{2}} \, \sigma_z \right) \right\} ,
		\end{align}
		where $\sigma_x$ and $\sigma_z$ are two of the Pauli matrices. The outcome set for each measurement is $O = \{0,1\}$ in this case. If by $\rho$ we denote the (arbitrary) state of a quantum system, then for any $a_1,a_2 \in O$,
		\begin{align}
			\mathrm{Pr} \left(a_1 \!\mid\! A_1,\rho \right) = \tr{\tfrac{1}{2} \left( \mathbbm{1} + \tfrac{(-1)^{a_1}}{\sqrt{2}} \, \sigma_x \right) \rho} , \\
			\mathrm{Pr} \left( a_2 \!\mid\! A_2,\rho \right) = \tr{\tfrac{1}{2} \left( \mathbbm{1} + \tfrac{(-1)^{a_2}}{\sqrt{2}} \, \sigma_z \right) \rho} .
		\end{align}
		 In order to show that $A_1$ and $A_2$ are compatible, we have to construct a measurement that can simulate them both. One such measurement is $A^{\prime} = \{N_{a_1, a_2} \}$, with $N_{a_1,a_2}$ defined for any $a_1,a_2 \in O$ as follows:
		\begin{align}
			N_{a_1,a_2} := \tfrac{1}{4} \left( \mathbbm{1} + \frac{(-1)^{a_1} \sigma_x + (-1)^{a_2} \sigma_z }{\sqrt{2}} \right)
		\end{align}
		This measurement has the following properties:
		\begin{align}
			A_1^{a_1} &= \sum_{a_2 \in O} N_{a_1,a_2} &  A_2^{a_2} &= \sum_{a_1 \in O} N_{a_1,a_2}
		\end{align}
		It follows that, for all $\rho$, the probabilities of outcomes of $A_1$ and of $A_2$ can be simulated by coarse-graining the probabilities of outcomes of $A'$:
		\begin{align}
			\mathrm{Pr} \left(a_1 \!\mid\! A_1,\rho \right) = \sum_{a_2} \tr{N_{a_1,a_2} \, \rho} = \sum_{a_2} \mathrm{Pr} \left(a_1,a_2 \!\mid\! A^{\prime},\rho \right) , \\
			\mathrm{Pr} \left(a_2 \!\mid\! A_2,\rho \right) = \sum_{a_1} \tr{N_{a_1,a_2} \, \rho} = \sum_{a_1} \mathrm{Pr} \left(a_1,a_2 \!\mid\! A^{\prime},\rho \right) ,
		\end{align}
		and thus $A_1$ and $A_2$ are compatible.
		
		\bigskip
		
		As a consequence of the definition of compatibility presented above, the marginal statistics of the set of compatible measurements has to be consistent with the existence of a joint probability distribution over $O_1 \times \ldots \times O_n$. Equivalently, whenever the statistics of a set of measurements fails to admit a joint probability distribution, there must necessarily be incompatibility within that set of measurements.
		
		However, it is important to point out that the converse need not hold. That is, if for each preparation there exists a joint probability distribution over $O_1 \times \ldots \times O_n$ that gives the statistics of each of the measurements $\{A_1, \ldots, A_n\}$, it does not follow that these are compatible measurements. The simplest counter-example is that of two projective measurements in quantum theory. Obviously, one can construct a joint distribution by taking the product of the respective distributions for each of the two measurements, regardless of whether they are compatible (i.e.~commuting) or not. These considerations will be important for understanding the relations that hold among the sufficiency properties in section~\ref{sec:3}.
	
	\subsection{Compatibility Approach to Contextuality}\label{se:comp}
		
		The traditional compatibility approach~\cite{AB}, also known as sheaf-theoretic approach, is concerned with a set of measurements and the compatibility relations among them. These relations are described by \emph{contexts}, each of which is a set of compatible measurements. A contextuality scenario is defined by a triplet $(X,O,\mathcal{M})$ and is called a \emph{marginal scenario}\footnote{Some works in the literature also refer to these scenarios as \emph{measurement scenarios}. Here we will stick to the notation used in \cite{AFLS}.} in this approach. $X$ denotes the set of measurements, each of whose set of outcome labels is without loss of generality assumed to be identical and equal to $O$. On the other hand, $\M$ is a subset of the power set of $X$ that corresponds to the \emph{maximal} contexts, i.e.~those contexts which are not contained within any other one. Hence, all the contexts in a given scenario are precisely the subsets of elements of $\M$. An example of the marginal scenario usually referred to as \emph{Specker's triangle} is presented in figure \ref{fig:32}. 
		
		{Given a maximal context $\x \in \M$, let $O^{\x}$ denote the set of all families of outcomes of the measurements in $\x$. That is, $O^{\x}$ naturally corresponds to the set of functions $\x \to O$. An assignment of probabilities to measurement outcomes, \mbox{$\{ P_{\x}(\mathbf{a}) \,:\, \x \in \M ,\, \mathbf{a} \in O^{\x} \}$}, is called an \emph{empirical model}.} For the purposes of studying the phenomenon of contextuality, we focus on empirical models that satisfy the no-disturbance principle, which states that for any two contexts $\x, \x' \in \mathcal{M}$,  $P_{\x|\x \cap \x^\prime} = P_{\x^\prime|\x \cap \x^\prime}$ holds. Here, $P_{\x|\x \cap \x^\prime}$ denotes the marginal distribution of $P_{\x}$ associated to the measurements in $\x \cap \x^\prime$. In words, no-disturbance imposes that the marginal statistics of a subset of measurements does not depend on the context in which they could be measured. No-disturbance also provides justification for considering only maximal contexts: If we are interested in the statistics of a set of compatible measurements that do not form a maximal context, we can use any maximal context that contains all of them and get the same result.
		
		An empirical model $P$ is said to be \emph{quantum} whenever its statistics may be recovered by performing, on some quantum state, projective measurements which satisfy the given compatibility relations.
		
		There is a particular family of marginal scenarios, mentioned also in \cite{ruishane}, that is of interest to us with regards to the discussion in this paper. We call them symmetric marginal scenarios and they are characterized by two parameters $(n,k)$.
		
		\begin{defn}[Symmetric marginal scenarios]{\hspace{1pt}}\label{def:SCS}\newline
			An $(n,k)$ \emph{symmetric marginal scenario} is a triple $(X,O,\mathcal{M})$ with $|X| = n$ and 
			\begin{equation}
				\mathcal{M} = \{\Omega \subseteq X \,:\, |\Omega| = k \}
			\end{equation}
		\end{defn}
	
	\subsection{Events-Based Hypergraph Approach to Contextuality}\label{se:ebh}
		
		The second formalism we discuss is the hypergraph-based formalism from reference~\cite{AFLS}. Here, a contextuality scenario is defined as an \emph{events-based hypergraph} $H = (V, E)$ with vertices $V$ and hyperedges $E$. The vertices correspond to the events in the scenario, each of which represents an outcome obtained from a device after it receives an input---the measurement choice. The hyperedges are sets of events representing all the possible outcomes given a particular measurement choice. The hypergraph approach assumes that every such measurement set is complete. That is, if the measurement corresponding to hyperedge $e$ is performed, exactly one of the outcomes in $e$ is obtained. Note that measurement sets may have non-trivial intersection. If an event appears in more than one hyperedge, it corresponds to outcomes of implementing different measurement choices that are operationally equivalent\footnote{Two measurement events are operationally equivalent if their probabilities coincide for every preparation of the system being probed \cite{Spe05}.}. 
		
		There is a close connection between the two approaches. In particular, given a marginal scenario $(X,O,\M)$, we can define an events-based hypergraph $H[X,O,\M]$ that corresponds to the same situation, in accordance with the Appendix D of~\cite{AFLS}. In a nutshell, the construction works as follows. 
		
		The vertices of $H[X,O,\M]$ are defined as $V(H) = \left\{ (\mathbf{a}|\mathbf{x}) : \mathbf{x} \in \mathcal{M}, \, \mathbf{a} \in O^\mathbf{x} \right\}$, where $\mathbf{x} \in \mathcal{M}$ is a maximal context and $\mathbf{a} \in O^\mathbf{x}$ is a family of outcomes of the measurements in $\mathbf{x}$ as before. The vertices of $H[X,O,\M]$ thus inherit the underlying structure of the initial scenario, since their labels contain the information about the maximal contexts and the possible outcomes. 
		
		On the other hand, the hyperedges arise via measurements protocols \citep[Def.~D.1.4]{AFLS} (see next paragraph for an example), which correspond to adaptive choices of measurements from $X$. They are adaptive, because a choice of an individual element of $X$ can depend on the outcomes of the previously chosen ones.

		\begin{figure}[!tb]
			\begin{center}
				\begin{subfigure}[b]{.4\textwidth}\centering
					\begin{tikzpicture}
						\node[draw=gray!60!black,fill, color=gray!60!black, shape=rectangle,scale=.5] at (-30:1) {} ;
						\node[draw=gray!60!black,fill, color=gray!60!black, shape=rectangle,scale=.5] at (90:1) {} ;
						\node[draw=gray!60!black,fill, color=gray!60!black, shape=rectangle,scale=.5] at (210:1) {} ;
						\path (-30:1) -- (90:1) node[pos=.5] (a) {};
						\path (90:1) -- (210:1) node[pos=.5] (b) {};
						\path (210:1) -- (-30:1) node[pos=.5] (c) {};
						\draw[thick,color=gray!60!black, rotate around={-60:(a)}] (a) ellipse (1.4cm and .3cm) ;
						\draw[thick,color=gray!60!black, rotate around={60:(b)}] (b) ellipse (1.4cm and .3cm) ;
						\draw[thick,color=gray!60!black] (c) ellipse (1.4cm and .3cm) ;
						\node at (-30:1.7) {$A_2$};
						\node at (90:1.7) {$A_1$};
						\node at (210:1.7) {$A_3$};
						\node at (0,-2) {};
					\end{tikzpicture}
					\caption{Specker's triangle as a marginal scenario.}\label{fig:32}
				\end{subfigure}
				\begin{subfigure}[b]{.4\textwidth}\centering
					\centering
					\begin{tikzpicture}[scale=1.5]
						\foreach \a in {0} \foreach \b in {0,1}
						{
						\node[draw,fill, color=gray!60!black, shape=circle,scale=.4] at (1+\b,1-\a) {};
						\node[above=7pt] at (1+\b,1-\a) {\tiny{$(\a\b|13)$}} ;
						\node[draw,fill, color=gray!60!black, shape=circle,scale=.4] at (-1+\b,1-\a) {};
						\node[above=7pt] at (-1+\b,1-\a) {\tiny{$(\a\b|12)$}} ;
						\node[draw,fill, color=gray!60!black, shape=circle,scale=.4] at (\b,-1-\a) {};
						\node[above=7pt] at (\b,-1-\a) {\tiny{$(\a\b|23)$}} ;
						}
						\foreach \a in {1} \foreach \b in {0,1}
						{
						\node[draw,fill, color=gray!60!black, shape=circle,scale=.4] at (1+\b,1-\a) {};
						\node[below=7pt] at (1+\b,1-\a) {\tiny{$(\a\b|13)$}} ;
						\node[draw,fill, color=gray!60!black, shape=circle,scale=.4] at (-1+\b,1-\a) {};
						\node[below=7pt] at (-1+\b,1-\a) {\tiny{$(\a\b|12)$}} ;
						\node[draw,fill, color=gray!60!black, shape=circle,scale=.4] at (\b,-1-\a) {};
						\node[below=7pt] at (\b,-1-\a) {\tiny{$(\a\b|23)$}} ;
						}
						\draw[thick,color=blue!70!black,rounded corners] (-1.2,-0.2) -- (-1.2,1.2)--(0.2,1.2)--(0.2,-0.2)--cycle;
						\draw[thick,color=blue!70!black,rounded corners] (0.8,-0.2) -- (0.8,1.2)--(2.2,1.2)--(2.2,-0.2)--cycle;
						\draw[thick,color=blue!70!black,rounded corners] (-0.2,-2.2) -- (-0.2,-0.8)--(1.2,-0.8)--(1.2,-2.2)--cycle;
						\draw[thick,color=Orange!70!black,rounded corners] (-1.25,1.15)--(0.25,1.15)-- (0.75,0.15) -- (2.25,0.15) -- (2.25,-0.15) -- (0.75,-0.15) -- (0.25, 0.85) -- (-1.25,0.85) --cycle;
						\draw[thick,color=Dandelion!70!black,rounded corners] (-1.25,0.15)--(0.25,0.15)-- (0.75,1.15) -- (2.25,1.15) -- (2.25,0.85) -- (0.75,0.85) -- (0.25, -0.15) -- (-1.25,-0.15) --cycle;
						\draw[thick,color=green!70!black,rounded corners] (-1.15, 1.15) -- (-0.85,1.15) -- (-0.85, 0) -- (-0.15,-1.85) -- (1.15,-1.85) -- (1.15,-2.15) -- (-0.15,-2.15) -- (-1.15,-0.15) -- cycle;
						\draw[thick,color=PineGreen!70!black,rounded corners] (-0.15,1.15) -- (0.15,1.15) -- (0.15,-0.85) -- (1.15,-0.85) -- (1.15,-1.15) -- (-0.15,-1.15) -- cycle;
						\draw[thick,color=Mulberry!70!black,rounded corners] (0.85,1.15) -- (1.15,1.15) -- (1.15,-2.15) -- (0.85,-2.15) -- cycle;
						\draw[thick,color=Magenta!70!black,rounded corners] (1.85,1.15) -- (2.15,1.15) -- (2.15,-0.15) -- (0.15,-0.85) -- (0.15,-2.15) -- (-0.15,-2.15) -- (-0.15,-0.85) -- (1.85,-0.15) -- cycle;
					\end{tikzpicture}
					\caption{Specker's triangle as a contextuality scenario.}\label{fig:H32}
				\end{subfigure}
			\end{center}
			\caption{\textbf{Specker's triangle scenario.} (a) Hypergraph representation of its formulation as a marginal scenario. Each vertex denotes a measurement, and each hyperedge a {maximal} context. There are three dichotomic measurements $X = \{A_1, A_2, A_3\}$ which are pairwise compatible, hence $\mathcal{M} = \{\{A_1, A_2\}, \{A_1, A_3\}, \{A_2, A_3\}\}$.  The outcome set is given by $O$. \newline (b) The events-based representation of the same scenario. The vertices are given by the events $(a_ia_j|ij)$ where $a_i, a_j \in O$ denote the outcomes of the measurements $A_i,A_j \in X$ from a context $\{A_i,A_j\} \in \mathcal{M}$. Blue hyperedges correspond to measuring maximal contexts, while the remaining hyperedges arise via other, less trivial, measurement protocols. The Specker's triangle corresponds to a $(3,2)$ symmetric marginal scenario, using the terminology introduced in section~\ref{se:comp}.}
			\label{fig:specktriang}
		\end{figure}
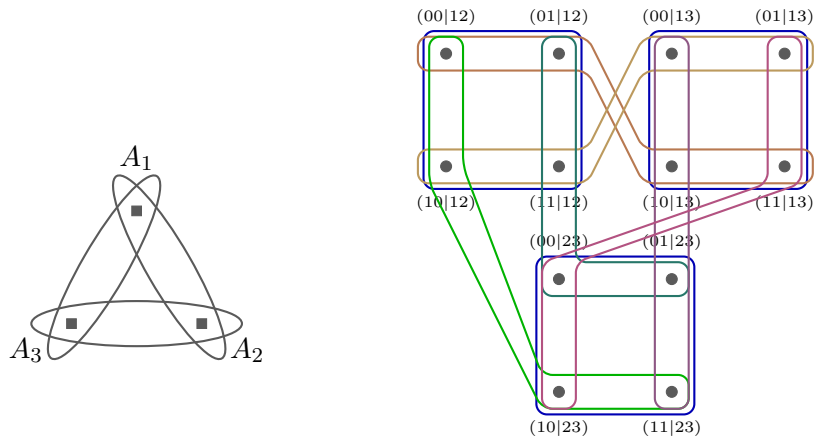

		As an example, consider the case of Specker's triangle scenario depicted in figure~\ref{fig:specktriang}. This scenario consists of three dichotomic measurements that are pairwise compatible but not triplewise compatible\footnote{For a discussion on how to realise such a scenario within quantum theory, we refer the reader to \cite{LSW}.}. Hence, as a marginal scenario $(X,O,\M)$ Specker's triangle features measurements $X=\{A_1,A_2,A_3\}$ with outcomes $O = \{0,1\}$ and maximal contexts \mbox{$\M =\{\{A_1, A_2\}, \{A_1, A_3\}, \{A_2, A_3\}\}$}, as shown in figure \ref{fig:32}.

		As an events-based hypergraph, Specker's triangle is shown in figure \ref{fig:H32}. The vertices of this hypergraph are given by the events $V = \{(a_ia_j|ij)\}_{i<j}$, where the indices $i$ and $j$ run from 1 to 3 and refer to the three measurements in $X$. One can notice that there are two types of hyperedges. First of all, there are the blue hyperedges of the form $\{(a_ia_j|ij) \}_{a_i, a_j \in O}$. These contain all the possible outcomes for the two measurements in a chosen context. Secondly, there are hyperedges corresponding to adaptive measurement protocols. As an example, consider the hyperedge depicted in dark green at the bottom left side of the hypergraph, containing vertices $(00|12)$, $(10|12)$, $(10|23)$ and $(11|23)$. This hyperedge may be understood as the following protocol. Initially, one performs measurement of $A_2$. If the outcome is $a_2 = 0$, then $A_1$ is measured next. However, if the outcome is $a_2 = 1$, then the second measurement one performs is $A_3$. The former possibility yields the events $(00|12)$ and $(10|12)$, while the latter gives $(10|23)$ and $(11|23)$.
	
	\subsection{Sets of Probabilistic Models for Events-Based Hypergraphs}\label{se:probmod}
		
		The objects of study in the events-based approach are the so-called \emph{probabilistic models}, which is a notion analogous to empirical models in the compatibility approach. A probabilistic model on a contextuality scenario $H$ is a functional $p : V \rightarrow [0, 1]$ denoting the conditional probability that an event $v$ occurs when any measurement $e$ with $v$ as one of its outcomes is performed. The identification of outcomes by operational equivalences in the definition of the contextuality scenario implies that the probability $p(v)$ is independent of the measurement $e$. Moreover, since the measurements are complete, the probabilities are normalized within each hyperedge. That is, $\sum_{v \in e} p(v) = 1$ for all $e\in E(H)$. Since the specification of a probabilistic model is a tuple of real numbers, we often treat a probabilistic model as a vector $\vec{p}$ embedded in a vector space with coordinates corresponding to the probabilities of events in $V$.
		
		We will denote the set of all valid probabilistic models on $H$ by $\G(H)$. There are various subsets of $\G(H)$ which are of particular interest. Here, we define the subsets corresponding to classical, quantum, and almost quantum models.
		
		\begin{defn}[Classical models]{\citep[Def.~4.1.1]{AFLS}}\label{def:nchv}\\
			A probabilistic model $p \in \mathcal{G}(H)$ is \emph{classical} if it can be written as a convex combination of the deterministic ones. Each deterministic probabilistic model $p \in \mathcal{G}(H)$ satisfies $p(v) \in \{0,1\}$ for all $v \in V$. The set of all classical models on $H$ is denoted by $\C(H)$ and forms a polytope, whose vertices are the deterministic probabilistic models.
		\end{defn}
		
		\begin{defn}[Quantum models]{\citep[Def.~5.1.1]{AFLS}}\label{def:qm}\\
			A probabilistic model $p \in \mathcal{G}(H)$ is \emph{quantum} if there exists a Hilbert space $\cH$, a single quantum state $\rho$ (density operator on $\cH$) and a projection operator $P_v$ on $\cH$ associated to every $v \in V$, such that 
			\begin{samepage} 
				\begin{compactenum}
					\item $\sum_{v \in e} P_v = \Id_{\cH} \quad \forall \, e \in E$,
					\item $p(v) = \tr{P_v \, \rho}\quad \forall \, v \in V$.
				\end{compactenum} 
			\end{samepage}
			The set of all quantum models on $H$ is denoted by $\Q(H)$.
		\end{defn}
		
		In the specific case of Bell scenarios the set $\Q(H)$ coincides with the traditional set of quantum correlations~\citep[Prop.~5.2.1]{AFLS}.
		\begin{defn}[Almost quantum models]{\citep[Def.~6.1.2]{AFLS}}\label{def:aqm}\\
			A probabilistic model $p \in \mathcal{G}(H)$ is \emph{almost quantum} if there exists a Hilbert space $\cH$, a single quantum state $\rho$ (density operator on $\cH$) and a projection operator $P_v$ on $\cH$ associated to every $v \in V$, such that 
			\begin{samepage} 
				\begin{compactenum}
					\item $\sum_{v \in e} P_v \leq \Id_{\cH} \quad \forall \, e \in E$,
					\item $p(v) = \tr{P_v \, \rho}\quad \forall \, v \in V$.
				\end{compactenum} 
			\end{samepage}
			The set of all almost quantum models on $H$ is denoted by $\Q_1(H)$.
		\end{defn}
		
		The terminology "almost quantum" for $Q_1$ models comes from their close connection to the set of correlations in Bell scenarios known as "almost quantum correlations"~\cite{AQ}. Indeed, it has been proven that the set $Q_1(H)$ is equivalent to the set of almost quantum correlations whenever $H$ is the hypergraph of a Bell scenario\footnote{For a discussion on how to define such Bell hypergraphs we refer the reader to \cite{AFLS,SW17}.}.
		
		A common feature shared by quantum and almost quantum models is that they both satisfy the consistent exclusivity principle~\cite{AQ, AFLS}. Explicitly, this means that all the CE inequalities defined below hold for every almost quantum (and quantum) probabilistic model.
		
		\begin{defn}[CE inequality]{\hspace{1pt}}\label{def:CE} \\
			Given a contextuality scenario $H=(V,E)$, an inequality of the form $\sum_{v \in \cS} p(v) \leq 1$ is a \emph{CE inequality} if and only if $\cS \subseteq V$ is a set of \textit{pairwise exclusive events}. {A set of events $\cS$ is pairwise exclusive if,} for each pair $u,v\in \cS$, there is a hyperedge $e \in E$ such that $u,v \in e$.
		\end{defn}
		
		{It is worth mentioning that for scenarios of the type $H[X,O,\M]$ which arise from marginal scenarios, the notion of exclusivity introduced in the definition of CE inequalities can be equivalently stated as follows~\cite{AFLS}: two events, $(\mathbf{a}|\mathbf{x})$ and $(\mathbf{a'}|\mathbf{x'})$, are exclusive if there is a measurement in $\mathbf{x} \cap \mathbf{x'}$ that assigns different outcomes in  $\mathbf{a}$ and $\mathbf{a'}$. Moreover, for these scenarios the set of empirical models on $(X,O,\M)$ is equivalent to the set of probabilistic models on $H[X,O,\M]$.}
		
		\begin{defn}[Consistent Exclusivity principle]{\hspace{1pt}} \\
			A theory satisfies the \emph{principle of consistent exclusivity} (CE) if every probabilistic model $p$ admissible by the theory satisfies all the relevant CE inequalities\footnote{A stronger version of the Consistent Exclusivity principle imposes constraints on the probabilistic models of a contextuality scenario by considering its application in larger scenarios that contain the original one (see \citep{AFLS}).}.
		\end{defn}
		
		\begin{defn}[CE models]{\hspace{1pt}}\label{def:CEmodels} \newline  
			A probabilistic model $p \in \G(H)$ is \emph{consistently exclusive} if it satisfies {the CE principle.} The set of all consistently exclusive models on $H$ is denoted by $\CE(H)$%
			\footnote{The set $\CE(H)$ defined here corresponds to the set $\CE^1(H)$ defined in \citep{AFLS}.} 
			and forms a convex polytope. 
		\end{defn}

\section{Specker's Principle and its Consequences}
	\label{sec:3}
	
	One can derive several constraints on the sets of symmetric marginal scenarios and the statistics they generate by considering consequences of Specker's principle, which we now introduce. 
	
	\begin{block}[Specker's principle as originally stated]
		"If you have several questions and you can answer any two of them, then you can also answer all of them." {\textup{[\href[pdfnewwindow]{https://vimeo.com/52923835}{vimeo.com/52923835 (2009)}, also attributed in reference~\citealp{speckerprinciple}.]}}
	\end{block}

	As stated, the principle refers to "questions", i.e.~measurements, and imposes that a set of pairwise compatible measurements is itself compatible. We can formalize this interpretation of Specker's principle as follows. 
	
	\begin{block}[Pairwise sufficiency for measurements] 
		If in a set of measurements every pair is compatible, then all the measurements are compatible.
	\end{block} 
	
	Under the principle of pairwise sufficiency for measurements, there would be no distinction between an $(n,2)$ symmetric marginal scenario and an $(n,n)$ symmetric marginal scenario, as those would merely be different representations of the same compatibility relations.
	
	This formalization of Specker's principle holds true in quantum theory for sharp \bel{(i.e.~projective)} measurements. Note that Specker's principle has elsewhere been invoked to motivate statistical constraints at the level of events, such as the principle of consistent exclusivity \cite{CSW, AFLS}. However, it is important to recognize that these statistical constraints are not equivalent to Specker's principle. Rather, they are implications thereof. In other words, Specker's principle pertains foremost to "questions" (i.e.~measurements), and only secondarily to "answers" (i.e.~events).
	
	While consistent exclusivity is a statistical constraint implied by Specker's principle, it is not the only one. We distinguish consistent exclusivity (which applies to any contextuality scenario) from the following statistical constraint (which refers to $(n,2)$ symmetric marginal scenarios only).
	
	\begin{block}[Pairwise sufficiency for probabilistic models] 
		If in a set of measurements every pair is compatible, then -- for every preparation -- the statistics generated by these measurements are marginals of some joint probability distribution.
	\end{block}
	
	Equivalently,  we can say that in a theory satisfying pairwise sufficiency for probabilistic models, the set of probabilistic models that arises from an events-based hypergraph that can be obtained from an $(n,2)$ symmetric marginal scenario for some integer $n$ is always classical. 
	
	Note that every compatible set of measurements admits a joint probability distribution over the outcomes of the measurements in the set, as implied by the definition of compatibility. Therefore, if a theory satisfies pairwise sufficiency for measurements, then it also satisfies pairwise sufficiency for probabilistic models. However, the converse need not hold, as has been pointed out at the end of section~\ref{sec:2-Compatibility}. This one-way relationship between the principle for measurements and the principle for probabilistic models is illustrated in figure \ref{fig:flowchart}.
	
	Besides the concept of pairwise sufficiency, one can define another property called all-but-one sufficiency, which might at first appear to be less constraining. All-but-one sufficiency captures the same idea as pairwise sufficiency, but applies only to a subset of marginal scenarios, namely \narrowmath{(n,n-1)} symmetric marginal scenarios. 
	
	\begin{block}[All-but-one sufficiency for measurements] 
		If in a set of at least three measurements, the elements of every proper subset are compatible, then all the measurements are compatible.
	\end{block}
	Under the principle of all-but-one sufficiency for measurements, there would be no distinction between an \narrowmath{(n,n-1)} symmetric marginal scenario and an $(n,n)$ symmetric marginal scenario for $n \geq 3$, as those would merely be different representations of the same compatibility relations. The qualifier $n\geq 3$ is needed, because otherwise all pairs of measurements would be compatible, and hence all contextuality scenarios would be deemed classical.
	
	\begin{block}[All-but-one sufficiency for probabilistic models] 
		If in a set of at least three measurements, the elements of every proper subset are compatible, then -- for every preparation -- the statistics generated by these measurements are marginals of some joint probability distribution.
	\end{block}
	In other words, we can say that a theory satisfies the principle of all-but-one sufficiency for probabilistic models if every probabilistic model $p$ is classical whenever the compatibility scenario is an \narrowmath{(n,n-1)} symmetric marginal scenario for some integer $n \geq 3$.
	
	Notice that if a theory satisfies pairwise sufficiency for measurements, then it also satisfies all-but-one sufficiency for measurements. Similarly, if a theory satisfies pairwise sufficiency for probabilistic models, then it satisfies all-but-one sufficiency for probabilistic models as well. Moreover, we now show that pairwise sufficiency and all-but-one sufficiency are actually \emph{equivalent} at the level of measurements, even though they seem to be distinct at the level of probabilistic models, as the conjecture in section \ref{sec:5} would imply, if true. The implication relations among the various consequences of Specker's principle are summarized in figure~\ref{fig:flowchart}.
	
	\begin{thm}\label{thm:notmostnot2}
		A theory satisfies the all-but-one sufficiency principle for measurements if and only if it satisfies the pairwise sufficiency principle for measurements.
	\end{thm}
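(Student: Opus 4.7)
The plan is to prove the two implications separately: the forward direction (pairwise sufficiency $\Rightarrow$ all-but-one sufficiency) is essentially immediate, while the reverse direction will go by induction on the number of measurements.

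For the forward direction, I would assume pairwise sufficiency and consider a set of $n \geq 3$ measurements in which every proper subset is compatible. Since $n \geq 3$, each pair $\{A_i, A_j\}$ is itself a proper subset, so by hypothesis every pair is compatible. Pairwise sufficiency then gives joint compatibility of the whole set.

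For the harder direction, I would assume all-but-one sufficiency and prove pairwise sufficiency by induction on $n$, the number of measurements. The base cases $n = 1, 2$ are immediate. For $n = 3$: if $\{A_1, A_2, A_3\}$ is pairwise compatible, then every proper subset (i.e.~every pair) is compatible, so all-but-one sufficiency yields joint compatibility of the triple. For the inductive step, I would assume that pairwise compatibility implies joint compatibility for every set of cardinality at most $n-1$, and consider $n$ pairwise compatible measurements $\{A_1, \ldots, A_n\}$. Any proper subset has cardinality at most $n-1$ and is itself pairwise compatible (being a subset of a pairwise compatible set), so by the induction hypothesis it is jointly compatible. Hence every proper subset of $\{A_1, \ldots, A_n\}$ is compatible, and applying all-but-one sufficiency once more (at level $n$) yields joint compatibility of the full set.

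The proof is essentially a clean induction, so I do not expect a serious obstacle. The one subtlety worth flagging is the qualifier $n \geq 3$ in the statement of all-but-one sufficiency: the induction must use all-but-one sufficiency at each level $k$ from $3$ up to $n$, which is fine since each such level satisfies the qualifier. I would also emphasize that the argument goes through at the level of \emph{measurements} precisely because compatibility of a set automatically implies compatibility of each of its subsets — a feature that is not shared by the analogous notion at the level of probabilistic models, which is why the corresponding equivalence conjectured in Section~\ref{sec:5} is nontrivial.
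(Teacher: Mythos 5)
Your proof is correct and follows essentially the same route as the paper: the forward direction is immediate, and the reverse direction is the same bootstrapping argument, applying all-but-one sufficiency at each cardinality from $3$ up to $n$ (the paper phrases it as an iteration rather than a formal induction, but the content is identical). Your closing remark about why the argument works at the level of measurements but not probabilistic models is a nice observation that matches the paper's discussion surrounding the theorem.
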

	\begin{proof}
		The "if" statement holds trivially, so let us focus on the "only if" implication. Assume that a theory satisfies the all-but-one sufficiency principle for measurements. Let $X$ be a set of $n$ measurements of which every pair is compatible. We can choose $n\geq 4$, because for $n=3$, the argument is trivial. By applying the assumption, we establish that every subset of $X$ of size $3$ is compatible. This follows because every such subset is a $(3,2)$ symmetric marginal scenario. With this established, we again apply the assumption in order to establish that every subset of $X$ of size $4$ is compatible. If $n> 4$, we can further iterate applications of the assumption until we establish that all subsets of size $n$, namely the entire set $X$, is compatible.
	\end{proof}

	\begin{figure}
		\begin{center}
			\begin{tikzpicture}
				\node at (-5.65,-0.2) {\color{red!70!blue} Specker's principle};
				\node at (-3,-0.25) {\Huge{$\Longleftrightarrow$}};
				\node at (0,0) {\color{red!70!blue}Pairwise sufficiency};
				\node at (0,-0.4) {\color{red!70!blue}for measurements};
				\node at (3,-0.25) {\Huge{$\Longleftrightarrow$}};
				\node at (6.2,0) {\color{red!70!blue}All-but-one sufficiency};
				\node at (6.2,-0.4) {\color{red!70!blue}for measurements};
				\node[rotate=-90] at (-5.65,-1.45) {\Huge{$\Longrightarrow$}};
				\node[rotate=-90] at (0,-1.45) {\Huge{$\Longrightarrow$}};
				\node[rotate=-90] at (6.2,-1.45) {\Huge{$\Longrightarrow$}};
				\node at (-5.65,-2.7) {\color{PineGreen}Consistent exclusivity};
				\node at (0, -2.5) {\color{red!70!blue}Pairwise sufficiency};
				\node at (0,-3) {\color{red!70!blue}for probabilistic models};
				\node at (3.1,-2.75) {\Huge{$\Longrightarrow$}};
				\node at (6.2, -2.5) {All-but-one sufficiency};
				\node at (6.2,-3) {for probabilistic models};
			\end{tikzpicture}
		\end{center}
		\caption{Some of the consequences of Specker's principle and the implications known to hold among them. The top row contains principles pertaining to the structure of measurements, while the bottom row contains principles pertaining to sets of probabilistic models. Text colour depicts whether a given statement holds in an almost quantum theory: the green statement is satisfied, the red statements are violated, and the black statement is the subject of conjecture in section~\ref{sec:5}.}
		\label{fig:flowchart}
	\end{figure}
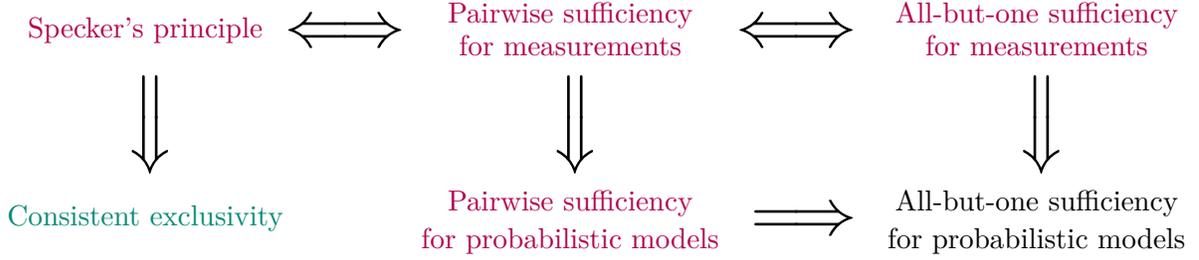

	One can naturally consider two hierarchies of principles akin to pairwise sufficiency and all-but-one sufficiency. For example, the principle of triplewise sufficiency for measurements states that triplewise compatibility implies joint compatibility. As another example, the all-but-two sufficiency principle for probabilistic models states that no nonclassical correlations can exist in a compatibility scenario wherein every collection of all-but-two measurements is compatible, i.e.~an \narrowmath{(n,n-2)} symmetric marginal scenario.

\section{Almost Quantum Models Violate Specker's Principle}
	\label{sec:4}
	
	Since projective measurements in quantum theory satisfy the principle of pairwise sufficiency, the set $\Q(H)$ for an events-based hypergraph $H$ obtained from an \narrowmath{(n,2)} symmetric marginal scenario coincides with $\C(H)$. Here, we demonstrate that there are probabilistic models in $\Q_1(H)$ for $H$ obtained from an \narrowmath{(4,2)} symmetric marginal scenario (see figure \ref{fig:42}), which are outside of $\C(H)$. This fact shows that almost quantum correlations violate "pairwise sufficiency for probabilistic models", and therefore any physical theory consistent with them violates "pairwise sufficiency for measurements" {and, by doing so, also Specker's principle.}
	
	In a \narrowmath{(4,2)} symmetric marginal scenario, each facet of the classical polytope is either a CE constraint or a pentagonal inequality~\cite{belMT}. One instance of the pentagonal inequalities reads
	\begin{align}\label{eq:pent}
		I_{\mathrm{pent}} =& - \ev{X_1X_2} - \ev{X_1X_3} + \ev{X_1X_4} + \ev{X_1} - \ev{X_2X_3} \\ \nonumber
		&+\ev{X_2X_4} + \ev{X_2} + \ev{X_3X_4} + \ev{X_3} - \ev{X_4} \leq 2\,,
	\end{align}
	where $\ev{X_iX_j} = \sum_{a_i,a_j \in O} (-1)^{a_i+a_j} p(a_ia_j|ij)$ and $\ev{X_j} = \sum_{a_j \in O} (-1)^{a_j} p(a_j|j)$. Notice that the upper bound of $2$ is satisfied for both classical and quantum probabilistic models~\cite{belMT}. This is expected, because it is well known that both classical and quantum theories respect Specker's principle.
	
	\begin{figure}[!tbh]
		\begin{center}
			\begin{tikzpicture}[scale=1.5]
				\foreach \a in {1} \foreach \b in {1,2}
				{
				\node[draw,fill, color=gray!60!black, shape=rectangle,scale=.5] at (1+\b,1-\a) {};
				\node[above=15pt] at (1+\b,1-\a) {$A_{\b}$} ;
				}
				\foreach \a in {2} \foreach \b in {1,2}
				{
				\node[draw,fill, color=gray!60!black, shape=rectangle,scale=.5] at (1+\b,1-\a) {};
				\pgfmathtruncatemacro\result{\a+\b}
				\node[below=15pt] at (1+\b,1-\a) {$A_{\result}$} ;
				}
				\draw[thick,color=gray!60!black] (2.5,0) ellipse (0.9cm and .2cm) ;
				\draw[thick,color=gray!60!black] (2.5,-1) ellipse (0.9cm and .2cm) ;
				\draw[thick,color=gray!60!black,rotate around={90:(2,-0.5)}] (2,-0.5) ellipse (0.9cm and .2cm) ;
				\draw[thick,color=gray!60!black,rotate around={90:(3,-0.5)}] (3,-0.5) ellipse (0.9cm and .2cm) ;
				\draw[thick,color=gray!60!black,rotate around={45:(2.5,-0.5)}] (2.5,-0.5) ellipse (1.2cm and .2cm) ;
				\draw[thick,color=gray!60!black,rotate around={-45:(2.5,-0.5)}] (2.5,-0.5) ellipse (1.2cm and .2cm) ;
			\end{tikzpicture}
		\end{center}
		\caption{Marginal scenario consisting of four pairwise compatible measurements. The four measurements are \mbox{$X = \{A_1, A_2, A_3, A_4\}$}, and the compatible subsets (contexts) are \mbox{$\mathcal{M} = \{\{A_1, A_2\}, \{A_1, A_3\}, \{A_1, A_4\},\{A_2, A_3\},\{A_2, A_4\},\{A_3, A_4\} \}$}.}
		\label{fig:42}
	\end{figure}
	
	A linear program optimisation over the maximum values of equation~\eqref{eq:pent} for general probabilistic models yields a value of $I_{\mathrm{pent}}^{*} = 6$. Similarly, optimising equation~\eqref{eq:pent} via a semidefinite program~\cite{AFLS} yields a maximum value of $I_{\mathrm{pent}}^{\mathrm{AQ}} = 2.5$ for almost quantum probabilistic models, which violates the classical and quantum bound of the inequality. This is despite the fact that almost quantum models satisfy all the CE inequalities.
	
	This result reveals a counterintuitive aspect of almost quantum models. Namely, there exist scenarios in which all the measurements are pairwise compatible yet almost quantum models are strictly more general than classical models. Therefore, almost quantum models are inconsistent with Specker's principle, in that any set of measurements that gives rise to them cannot satisfy it. 
	
	\begin{thm}
		\label{thm:AQCviolates2andAllBut2}
		There is no set of measurements, in any generalized probabilistic theory, which both gives rise to almost quantum models and also satisfies Specker's principle.
	\end{thm}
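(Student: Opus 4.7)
The plan is to argue by contradiction, leveraging the pentagonal inequality calculation already exhibited for the $(4,2)$ symmetric marginal scenario. Suppose, for contradiction, that some generalized probabilistic theory both yields an almost quantum probabilistic model on the events-based hypergraph $H$ associated to the $(4,2)$ scenario of figure~\ref{fig:42} and satisfies Specker's principle. Since Specker's principle has been reformulated as pairwise sufficiency for measurements, and since the $(4,2)$ scenario consists of four pairwise compatible measurements $\{A_1, A_2, A_3, A_4\}$, pairwise sufficiency for measurements forces the entire set $\{A_1, A_2, A_3, A_4\}$ to be jointly compatible in that theory.

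The next step is to convert this structural fact about measurements into a statistical constraint. By Definition~\ref{def:CompMeas}, joint compatibility of the four measurements means there exists a single measurement $A'$ on outcome set $O^4$ that simulates all four, so for every preparation $\rho$ the statistics $\operatorname{Pr}(\mathbf{a}|A',\rho)$ is a joint distribution whose marginals reproduce the statistics of every subset of $\{A_1,\ldots,A_4\}$, in particular of every context in $\M$. Consequently the induced probabilistic model on $H$ can be written as a convex mixture of deterministic assignments (those picked out by each atom of the joint distribution), which places it in $\C(H)$ by Definition~\ref{def:nchv}. In particular, it must satisfy every facet inequality of the classical polytope, including the pentagonal inequality $I_{\mathrm{pent}} \leq 2$ from equation~\eqref{eq:pent}.

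The contradiction is then immediate from the semidefinite programming bound already cited: the almost quantum value $I_{\mathrm{pent}}^{\mathrm{AQ}} = 2.5$ exceeds $2$, so there exists an almost quantum model in $\Q_1(H)$ that violates the classical bound and therefore cannot have arisen from a jointly compatible set of measurements in the hypothetical theory. This rules out the assumed coexistence and proves the theorem.

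The only delicate point in the plan is ensuring that the passage from "the four measurements are jointly compatible in the theory" to "the induced statistics on $H$ are classical in the sense of Definition~\ref{def:nchv}" is airtight. This requires noting that in the $(4,2)$ scenario every maximal context in $\M$ is a pair contained in $\{A_1,\ldots,A_4\}$, so the joint distribution on $O^4$ guaranteed by compatibility marginalizes down to the probabilities on each maximal hyperedge of $H$, and its support decomposition into delta functions yields exactly the convex combination of deterministic probabilistic models required. Once this observation is made explicit, the rest of the argument is a one-line invocation of the pentagonal gap between $2$ and $2.5$.
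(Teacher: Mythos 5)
Your proposal is correct and follows essentially the same route as the paper: the authors likewise observe that Specker's principle (pairwise sufficiency for measurements) forces joint compatibility of the four pairwise-compatible measurements in the $(4,2)$ scenario, hence a joint distribution and thus a classical model satisfying $I_{\mathrm{pent}} \leq 2$, which the almost quantum value $I_{\mathrm{pent}}^{\mathrm{AQ}} = 2.5$ contradicts. The only difference is presentational --- you inline the implication from ``pairwise sufficiency for measurements'' to ``pairwise sufficiency for probabilistic models'' (correctly, via the atomic decomposition of the joint distribution), whereas the paper states that implication separately in section~\ref{sec:3} and invokes it.
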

	
	By theorems~\ref{thm:notmostnot2} and~\ref{thm:AQCviolates2andAllBut2}, we learn that there is no set of measurements, in any theory, which both gives rise to almost quantum models and also satisfies the principle of all-but-one sufficiency for measurements.
	
{As a final comment, the research scope that tackles the formulations of general physical theories beyond the quantum one has particularly focused on the definition of a subset of measurements, known as \textit{sharp measurements}, which correspond to projective measurements in the case of quantum theory. There is still no consensus on how sharp measurements should be defined in a general theory \cite{giuliosharp,giuliosharp2}.  In the following we argue that (i) any definition of sharp measurements in a theory that yields almost quantum correlations must violate Specker's principle, and (ii) any notion of sharpness in an almost quantum theory must deviate from the candidates proposed so far \cite{giuliosharp,giuliosharp2}. }	
	
	\begin{cor}
		\label{cor:SharpNotionAntiSpecker}
		If in any theory there is a notion of sharpness, relative to which the sharp measurements yield almost quantum correlations, then there are some sharp measurements in the theory which violate Specker's principle.
	\end{cor}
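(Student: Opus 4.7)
The plan is to derive this corollary as an immediate application of Theorem~\ref{thm:AQCviolates2andAllBut2}. First, I would unpack the hypothesis: we are given a theory equipped with some notion of sharpness such that the correlations produced by sharp measurements include the almost quantum ones. In particular, the almost quantum violation of the pentagonal inequality \eqref{eq:pent} on the $(4,2)$ symmetric marginal scenario of figure~\ref{fig:42}, with value $I_{\mathrm{pent}}^{\mathrm{AQ}} = 2.5 > 2$, must be realizable by some collection of sharp measurements in the theory. Fix such a collection $\{A_1, A_2, A_3, A_4\}$ of sharp measurements.

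Next, I would invoke Theorem~\ref{thm:AQCviolates2andAllBut2} applied to this particular collection. The theorem asserts that no set of measurements in any generalized probabilistic theory can simultaneously generate almost quantum correlations and obey Specker's principle (equivalently, pairwise sufficiency for measurements). Since $\{A_1, A_2, A_3, A_4\}$ does generate an almost quantum model lying strictly outside $\C(H)$, it must fail pairwise sufficiency. Concretely, these four sharp measurements are pairwise compatible (by virtue of being assigned to a $(4,2)$ scenario) yet cannot be globally jointly measurable, because global joint measurability would force the induced statistics into the classical polytope and thus respect the pentagonal bound of $2$.

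Hence the collection $\{A_1, A_2, A_3, A_4\}$ is an explicit witness of sharp measurements violating Specker's principle in the theory, which establishes the claim. The only point requiring mild care is the interpretation of the phrase \emph{``sharp measurements yield almost quantum correlations''}: under the natural reading that every almost quantum correlation admissible in the theory is realized by some set of sharp measurements (or at least that the pentagonal violation is), the corollary is immediate. No additional mathematical work beyond Theorem~\ref{thm:AQCviolates2andAllBut2} is needed; the corollary simply transports its conclusion into the sharp sector of a hypothetical almost quantum theory, and in doing so shows that candidate definitions of sharpness which demand Specker's principle cannot coexist with almost quantum correlations.
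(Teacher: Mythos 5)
Your proposal is correct and matches the paper's own (one-line) argument: the corollary is simply Theorem~\ref{thm:AQCviolates2andAllBut2} specialized to the case where the set of measurements yielding almost quantum correlations is the set of sharp ones, and your explicit witness via the pentagonal violation in the $(4,2)$ scenario is exactly the content underlying that theorem. Nothing further is needed.
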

	
	This corollary is an instance of theorem~\ref{thm:AQCviolates2andAllBut2}, whereby the set of measurements giving rise to almost quantum correlations is exactly the one that satisfies some (unspecified) notion of `sharpness'. 
	
	This specific case can be motivated by analogy with quantum theory. As in quantum theory, in a hypothetical almost quantum theory one must also restrict the set of allowed measurements to a strict subset of the set of all measurements in order to pick out exactly the almost quantum correlations in Kochen-Specker contextuality scenarios. If all measurements were allowed -- in either quantum or almost quantum theory -- then one could realize \emph{any} logically possible probabilistic model, using the completely noisy effects\footnote{In either case, one can always construct a measurement generating the desired set of probabilities by multiplying the unit effect of the theory with the probabilities one aims to generate.}~\cite{robustcsw}. Furthermore, the set of measurements allowed in quantum theory is exactly the set of sharp quantum measurements. In an almost quantum theory, it is unclear what this allowed set of measurements would be, but one might expect such measurements to correspond to some notion of sharpness as well. 
	
	For example, Corollary~\ref{cor:SharpNotionAntiSpecker} sheds light on the notion of sharpness for general probabilistic theories that was proposed by Chiribella and Yuan~\cite{giuliosharp,giuliosharp2}. The notion of sharpness there \emph{does} imply Specker's principle, and as such no theory can give rise to the almost quantum models using only the sharp measurements in the sense of references~\cite{giuliosharp,giuliosharp2}. In particular, such a theory cannot violate the pentagonal inequality presented in equation~\eqref{eq:pent}.

\section{Almost Quantum Models and the All-But-One Sufficiency Principle}
	\label{sec:5}
	
	The example presented in the previous section teaches us that any almost quantum theory must violate the all-but-one sufficiency principle for measurements. Incidentally, it also demonstrates that almost quantum models fail to satisfy the all-but-two sufficiency principle for probabilistic models, since $\C(H)\neq\Q_1(H)$ for an event-based hypergraph $H$ obtained from a $(4,2)$ symmetric marginal scenario. 
	
	Now, is it possible that almost quantum \emph{models} nevertheless satisfy the principle of all-but-one sufficiency, despite its violation at the level of measurements by any almost quantum \emph{theory}? It is conceivable that a physical theory might satisfy the principle of all-but-one sufficiency at the level of probabilistic models while violating the principle at the level of measurements.
	
	It might feel unnatural to divorce a statistical constraint from any restriction on the structure of measurements. Nevertheless, this unexpected satisfaction of the all-but-one sufficiency principle for probabilistic models without motivation from the corresponding measurement-based principle appears to be a feature of any almost quantum theory. 
	
	We state this as a conjecture, and provide suggestive evidence in terms of a theorem that confirms the conjecture for the special case of dichotomic measurements.
	
	\begin{conj}
		\label{conj:alloutcomes}
		Almost quantum correlations satisfy the all-but-one sufficiency principle for probabilistic models. In other words, the sets $\Q_1(H)$ and $\C(H)$ coincide, whenever $H$ corresponds to an \mbox{\narrowmath{(n,n-1)}} symmetric marginal scenario.
	\end{conj}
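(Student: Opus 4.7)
The plan is to attempt a constructive approach that lifts the strategy of the dichotomic case to arbitrary outcome alphabets. Given any $p \in \Q_1(H)$ where $H$ arises from an $(n,n-1)$ symmetric marginal scenario, Definition~\ref{def:aqm} supplies a Hilbert space $\cH$, a state $\rho$, and projectors $\{P_v\}_{v \in V}$ satisfying the almost quantum conditions. For $i \in \{1, \ldots, n\}$, let $\mathbf{x}_i := X \setminus \{A_i\}$ denote the maximal context that excludes $A_i$, so every event has the form $(\mathbf{a}|\mathbf{x}_i)$ with $\mathbf{a} \in O^{\mathbf{x}_i}$. From these joint projectors I would extract single-measurement operators $Q_j^{a_j}$ by summing $P_{(\mathbf{a}|\mathbf{x}_i)}$ over all outcomes other than $a_j$, for any $i \neq j$; no-disturbance guarantees that $\tr{Q_j^{a_j}\, \rho}$ does not depend on the choice of $i$.

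The candidate joint distribution over all $n$ measurements is
\begin{equation}
P(a_1, \ldots, a_n) \;:=\; \tfrac{1}{n!} \sum_{\sigma \in S_n} \tr{Q_{\sigma(1)}^{a_{\sigma(1)}} \cdots Q_{\sigma(n)}^{a_{\sigma(n)}} \, \rho} .
\end{equation}
To verify that $P$ is a probability distribution reproducing the marginals $p(\cdot|\mathbf{x}_i)$, the tasks are (i) marginalising over $a_i$ must give $p(\mathbf{a}|\mathbf{x}_i)$; (ii) $\sum_{\mathbf{a}} P(\mathbf{a}) = 1$; and (iii) $P(\mathbf{a}) \geq 0$. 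Items (i) and (ii) should follow from the completeness relations $\sum_{a_j} Q_j^{a_j} \leq \Id$ combined with the orthogonality relations that the adaptive-protocol hyperedges of $H[X,O,\M]$ enforce on almost quantum projectors, which permit a product of $n-1$ of the $Q_j^{a_j}$ to be replaced on $\rho$ by the joint projector $P_{(\mathbf{a}|\mathbf{x}_i)}$.

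The central obstacle is item (iii), non-negativity of $P(\mathbf{a})$. This requires enough commutation on $\rho$ among the $Q_j^{a_j}$ for the symmetrized trace to lie in $[0,1]$. In the dichotomic case, each projector is parameterised by a single observable, and the required non-negativity becomes a finite family of semidefinite inequalities at the first level of the NPA hierarchy, which matches the almost quantum definition; this is what makes the base case of the conjecture tractable. For $|O| \geq 3$ no analogous reduction is evident, because the adaptive hyperedges of $H$ enforce commutation on $\rho$ only among projectors that co-occur in some hyperedge, while $P(\mathbf{a})$ involves products of $n$ single-measurement operators that in an $(n,n-1)$ scenario never simultaneously sit inside any single maximal context. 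A resolution would likely require either a direct operator-algebraic argument exploiting the full adaptive-protocol structure of $H[X,O,\M]$, or an inductive reduction that conditions on the outcome of one measurement, turning an $(n,n-1)$ instance with outcome set $O$ into a family of $(n-1,n-2)$ instances and eventually invoking the dichotomic base case via coarse-graining. Controlling such an induction carefully is where I expect the real difficulty to lie.
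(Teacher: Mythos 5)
You are attempting to prove a statement that the paper itself leaves as a \emph{conjecture}: the authors establish only the dichotomic case (theorem~\ref{thm:(n,n-1)}), and they do so by a route entirely different from yours. Their argument is combinatorial and geometric: they exhibit a finite family $\I_n$ of inequalities, prove by induction on $n$ that each is a CE inequality (via a recursive construction of sets of pairwise exclusive generalized events), and then show by facet counting --- using the dimension of the classical polytope and McMullen's upper bound theorem for cyclic polytopes --- that these inequalities cut out exactly $\C(H)$, whence $\CE(H)\subseteq\cP=\C(H)$ and the whole chain collapses. Nothing in that proof is operator-algebraic, so your remark that the base case is tractable because non-negativity ``becomes a finite family of semidefinite inequalities at the first level of the NPA hierarchy'' does not describe how the binary case is actually handled; you would still owe a proof of the base case on your own terms, and it is not obvious one exists along your lines.

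Beyond that, your construction has concrete gaps in addition to the one you flag. First, non-negativity of $P(\mathbf{a})$ is not a loose end but the entire content of the claim: a symmetrized trace of a product of $n$ non-commuting positive operators against a state can be negative, and the exclusivity relations encoded in the hyperedges of $H[X,O,\M]$ only constrain projectors that co-occur in some (possibly adaptive) protocol, which in an $(n,n-1)$ scenario never involves all $n$ measurements at once. Note also that your formula makes no essential use of the $(n,n-1)$ structure --- the same single-measurement operators and the same symmetrized product can be written down in the $(4,2)$ scenario, where the paper's pentagonal inequality shows $\Q_1(H)\neq\C(H)$; so any successful version of your argument must fail there, and your sketch does not identify where the $(n,n-1)$ hypothesis enters. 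Second, the marginalization and normalization steps are not secured either: for almost quantum models the completeness relation is only the inequality $\sum_{a_j}Q_j^{a_j}\leq\Id$ (definition~\ref{def:aqm}); the normalization $\tr{\bigl(\sum_{v\in e}P_v\bigr)\rho}=1$ lets you replace the deficient sum by $\Id$ only when it sits adjacent to $\rho$ inside the trace, not when it is sandwiched between other non-commuting operators. Moreover, the operators $Q_j^{a_j}$ extracted from different contexts $\mathbf{x}_i$ are guaranteed to agree only in expectation on $\rho$, not as operators, so the product $Q_{\sigma(1)}^{a_{\sigma(1)}}\cdots Q_{\sigma(n)}^{a_{\sigma(n)}}$ is not even well defined without further argument. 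As it stands the proposal is a research programme with an honestly identified central obstruction, not a proof, and the conjecture remains open.
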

	
	\begin{thm}
		\label{thm:(n,n-1)}
		If $H$ is an events-based hypergraph constructed from an \narrowmath{(n,n-1)} symmetric marginal scenario with $n \geq 2$ and dichotomic measurements, then the sets $\C(H)$, $\Q(H)$, $\Q_1(H)$ and $\CE(H)$ all coincide.
	\end{thm}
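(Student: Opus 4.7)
The plan is to establish the chain of inclusions $\C(H) \subseteq \Q(H) \subseteq \Q_1(H) \subseteq \CE(H)$ and then close the cycle by proving $\CE(H) \subseteq \C(H)$. The first three inclusions follow directly from the definitions: a convex combination of deterministic classical models admits a trivial realisation with diagonal projectors on a suitable Hilbert space; every quantum model is almost quantum since $\sum_{v \in e} P_v = \id$ implies $\sum_{v \in e} P_v \leq \id$; and projectors $\{P_v\}_{v \in e}$ with $\sum_{v\in e} P_v \leq \id$ must be pairwise orthogonal, so any set of pairwise exclusive events has pairwise orthogonal projectors, giving probabilities that sum to at most one.

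For the nontrivial direction $\CE(H) \subseteq \C(H)$, I first observe that any $p \in \CE(H)$ satisfies no-disturbance (the CE inequalities subsume the normalisation and marginal-consistency constraints), so $p$ induces well-defined marginal distributions on every proper subset of $X$. Passing to the $\pm 1$ parametrisation of the dichotomic outcomes, all correlators $\langle X_S \rangle$ for $S \subsetneq [n]$ are then fixed by $p$, and a joint distribution on $O^X$ extending these marginals is specified uniquely by the single free parameter $\alpha := \langle X_{[n]} \rangle$, through the M\"obius inversion $2^n\, p(\mathbf{x}) = f(\mathbf{x}) + \alpha\, \epsilon(\mathbf{x})$, where $\epsilon(\mathbf{x}) = \prod_i x_i$ and $f(\mathbf{x}) = \sum_{S \subsetneq [n]} \langle X_S \rangle \prod_{i \in S} x_i$.

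Hence a valid (non-negative) extension exists iff, for every pair $(\mathbf{x},\mathbf{y}) \in \{-1,+1\}^n \times \{-1,+1\}^n$ with $\epsilon(\mathbf{x}) = +1$ and $\epsilon(\mathbf{y}) = -1$, the inequality $f(\mathbf{x}) + f(\mathbf{y}) \geq 0$ holds. The heart of the argument is to show that each such inequality is a consequence of the CE inequalities available to $p$. Rewriting $f(\mathbf{x}) + f(\mathbf{y})$ in the probability basis, I would exhibit---for each such pair---a decomposition of the required inequality as a sum of CE inequalities, by partitioning the relevant collection of outcome events on the $n$ contexts of size $n-1$ into classes of pairwise exclusive events. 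The case $n=3$ is already instructive: the six events $\{(01|ij),(10|ij)\}_{i<j}$ form a 6-cycle in the non-exclusivity graph which splits into two independent triples, yielding two CE inequalities whose sum delivers the Specker-type bound $\langle X_1 X_2\rangle + \langle X_1 X_3\rangle + \langle X_2 X_3\rangle \geq -1$.

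The main obstacle is the combinatorial claim in the final step: for arbitrary $n$ and arbitrary opposite-parity pair $(\mathbf{x},\mathbf{y})$, one must construct an explicit partition of the relevant events into pairwise exclusive classes whose CE inequalities combine to prove $f(\mathbf{x}) + f(\mathbf{y}) \geq 0$. I expect this to be governed by the symmetric-difference set $D := \{i : x_i \neq y_i\}$, whose cardinality is necessarily odd, and to reduce to verifying bipartiteness (or an appropriate proper coloring) of the associated conflict graph on the induced events. Both restrictions in the theorem---dichotomic outcomes and the all-but-one context structure---are essential: the former yields a one-parameter extension problem, and the latter ensures that no-disturbance fixes all proper-subset correlators from the empirical data.
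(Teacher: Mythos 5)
Your reduction of the problem is sound and, for one half of the argument, takes a genuinely different route from the paper. The inclusions $\C(H)\subseteq\Q(H)\subseteq\Q_1(H)\subseteq\CE(H)$ are standard, and your identification of the classical set with the polytope cut out by the inequalities $f(\mathbf{x})+f(\mathbf{y})\geq 0$ over opposite-parity pairs is correct: since the $(n,n-1)$ contexts fix every Fourier coefficient $\langle X_S\rangle$ with $S\subsetneq[n]$, the joint-extension problem is one-dimensional in $\alpha=\langle X_{[n]}\rangle$, and the interval-intersection condition gives exactly those pairwise inequalities. The paper instead defines the same family of inequalities $\I_n$ (your pairs with $|D|=k$ odd are its antipodal pairs $(\mathbf{s},\mathbf{o})$, $(\bar{\mathbf{s}},\mathbf{o})$) and proves the polytope equality by a facet-counting argument: dimension $2^n-2$, a count of $2^{2n-2}$ facets, and McMullen's upper bound theorem applied to the cyclic polytope $C(2^n-2,2^n)$ to show $\C(H)$ can have no further facets. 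Your one-parameter extension argument reaches the same conclusion more directly and is, if anything, cleaner.

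However, the second half --- showing that each inequality $f(\mathbf{x})+f(\mathbf{y})\geq 0$ is a consequence of consistent exclusivity --- is the combinatorial core of the theorem, and you have not proved it. You verify $n=3$ and then state that for general $n$ one ``must construct an explicit partition'' and that you ``expect'' the structure to reduce to bipartiteness of a conflict graph; this is precisely the claim that needs a proof, not a plausibility argument. The paper devotes its main lemma to this step, and the construction is not a two-colouring: for each odd $|D|=k<n$ it reduces to a smaller scenario by appending the shared outcomes $\mathbf{o}$, and for $k=n$ it builds, by a two-step recursion $\cD_m^{\mathbf{s}_m}$ from $\cD_{m-2}^{\mathbf{s}_{m-2}}$, a \emph{single} set of pairwise exclusive generalized events (involving at most $m-1$ measurements each, so that they live in genuine contexts) whose probabilities sum to $1-\I_m^{m,X,\mathbf{s},\emptyset}$. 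Maintaining pairwise exclusivity through that recursion requires an auxiliary invariant (each event must disagree with both $\mathbf{s}$ and $\bar{\mathbf{s}}$ on some measurement), which is exactly the kind of detail your sketch does not anticipate. Until you supply this construction, or an equivalent certificate that every such inequality is a nonnegative combination of CE inequalities, the inclusion $\CE(H)\subseteq\C(H)$ --- and hence the theorem --- remains unestablished.
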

	
	The proof of theorem~\ref{thm:(n,n-1)} is presented in appendix~\ref{sec:Proof}. This theorem tells us that the sets $\Q_1(H)$ and $\C(H)$ coincide whenever $H$ corresponds to an \narrowmath{(n,n-1)} symmetric marginal scenario with binary outcome measurements, in agreement with the conjecture above.
	
	Thus, we conclude that almost quantum models satisfy some statistical consequences of Specker's principle, but not others. For example, they satisfy the principle of consistent exclusivity, as well as the principle of all-but-one sufficiency for probabilistic models (at least for binary outcome measurements). On the other hand, almost quantum models satisfy neither the principle of pairwise sufficiency nor that of \emph{all-but-two} sufficiency, {be it for measurements or probabilistic models.} This dichotomy challenges our ability to understand the statistical predictions of any almost quantum theory in terms of Specker-like restrictions at the level of measurements. 

\section{Conclusions}
	\label{sec:Conclusion}
	
	In this work, we have explored some consequences of Specker's principle and their consistency with almost quantum correlations. By studying contextuality scenarios on a single system, we have found a fundamental difference between quantum theory and any potential almost quantum theory, complementing the results by Sainz et al.~\cite{noR}. There, a different no-go theorem (pertaining to almost quantum theories and the no-restriction hypothesis) was inferred based on considerations of Bell scenarios involving multiple subsystems.
	
	Our results imply that in any general probabilistic theory, the structure of measurements reproducing almost quantum models is in contradiction with Specker's principle. Accordingly, the notion of sharpness proposed in references~\cite{giuliosharp,giuliosharp2} cannot be used to recover the almost quantum correlations. This result runs counter to sentiments implicit in earlier literature. Previously, almost quantum models appeared to be the purest embodiment of Specker's principle~\cite{speckerprinciple,AQ,CSW,Yan2013,Cabello2014Tsirelson,Cabello2014FullQuantum}, in the sense that they are 
	uniquely identified\footnote{Almost quantum correlations are uniquely identified under the assumptions that the set of correlations allowed in Nature contains the quantum one, and by considering the ramifications of consistent exclusivity even in the limit of many independent copies of the scenario.} by consistent exclusivity and closure under wirings~\citep[Thrm. 7.6.2]{AFLS}. 
	
	However, consistent exclusivity is not the only constraint on probabilistic models implied by Specker's principle. Another such principle is pairwise sufficiency for probabilistic models, as defined in section~\ref{sec:3}. Despite the "success" of the  almost quantum correlations relative to consistent exclusivity, we nevertheless witness their failure to satisfy Specker's principle when analyzed through the lens of pairwise sufficiency. As such, the above preconceptions must be reversed. Almost quantum models do not exemplify Specker's principle. Rather, they are antithetical to it.
	
	For advocates of Specker's principle, our results challenge the possible physical significance of the almost quantum correlations~\cite{joe-hist}. More importantly, however, our findings restore the prominence of Specker's principle as a potential means for identifying the essence of quantum theory. The violation of Specker's principle by almost quantum models demonstrates that holistic considerations of Specker's principle enable greater insight than consistent exclusivity alone.
	
	All the results in this manuscript were found solely by studying a rather limited set of contextuality scenarios, namely symmetric scenarios with binary outcome measurements. Consideration of asymmetric compatibility structures or scenarios beyond binary-outcome measurements might illuminate stronger constraints and richer intuitions about almost quantum theory than the results presented here.
	

\section{Acknowledgements}
	We thank Emily Kendall for fruitful discussions. This research was supported by Perimeter Institute for Theoretical Physics. Research at Perimeter Institute is supported by the Government of Canada through the Department of Innovation, Science and Economic Development Canada and by the Province of Ontario through the Ministry of Research, Innovation and Science.

\setlength{\bibsep}{2pt plus 1pt minus 2pt}
\bibliographystyle{apsrev4-1}
\nocite{apsrev41Control}

\bibliography{winterschool}

\appendix

\section{Proof of Theorem \ref{thm:(n,n-1)}}
	\label{sec:Proof}
	
	For the proof of theorem~\ref{thm:(n,n-1)}, it is useful to introduce the concept of a generalized event in relation to the events as vertices of some hypergraph $H$. This notion refers to a coarse-graining of events in $H$. For instance, in the case of figure~\ref{fig:H32}, a generalized event \mbox{$g=(a_1|1)$} can be defined as either the coarse graining of $\{(a_1, a_2|1,2), (a_1, \bar{a}_2|1,2)\}$ or that of $\{(a_1, a_3|1,3), (a_1, \bar{a}_3|1,3)\}$, since the measurement $A_1$ is compatible with both $A_2$ and $A_3$ and the no-disturbance condition holds. Each of these sets $\{(a_1, a_j|ij), (a_1, \bar{a}_j|ij)\}$ is then a \textit{refinement} of~$g$. The probability assigned to a generalized event is then given by summing the probabilities of all the events in any of its refinements.
	
	The notion of exclusivity of events can be also extended to generalized events. Two generalized events $g,g'$ are said to be exclusive if events in a refinement of $g$ are pairwise exclusive with events in a refinement of $g'$. Equivalently, one can say that the union of any refinements of $g$ and $g'$ is a set of pairwise exclusive events in $V$.
	
	In this appendix we study almost quantum probabilistic models in a particular family of contextuality scenarios, which we call \textit{binary $n$-Specker} scenarios. These are the events-based hypergraphs corresponding to \narrowmath{(n,n-1)} symmetric marginal scenarios introduced in definition~\ref{def:SCS}. Moreover, we restrict our attention to cases with the measurements in $X$ each having two outcomes, so that we can choose $O = \{0,1\}$. The set of contexts $\M$ consists of every subset of $X$ with \narrowmath{n-1} elements. That is, each proper subset of $X$ is jointly measurable, but the whole set $X$ of $n$ measurements is not necessarily jointly measurable. Figure \ref{fig:32} depicts a $3$-Specker scenario and figure~\ref{fig:43} a $4$-Specker scenario.
	
	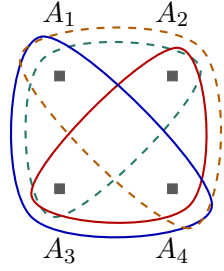
\begin{figure}[!tbh]
		\begin{center}
			\begin{tikzpicture}[scale=1.5]
				\foreach \a in {1} \foreach \b in {1,2}
				{
				\node[draw,fill, color=gray!60!black, shape=rectangle,scale=.5] at (1+\b,1-\a) {};
				\node[above=15pt] at (1+\b,1-\a) {$A_{\b}$} ;
				}
				\foreach \a in {2} \foreach \b in {1,2}
				{
				\node[draw,fill, color=gray!60!black, shape=rectangle,scale=.5] at (1+\b,1-\a) {};
				\pgfmathtruncatemacro\result{\a+\b}
				\node[below=15pt] at (1+\b,1-\a) {$A_{\result}$} ;
				}
				
				\draw[thick,color=PineGreen!70!black,dashed] plot [smooth cycle,tension=.7] coordinates {(1.85,0.15) (3.25,0.05) (1.95,-1.25) } ;
				\draw[thick,color=red!70!black, rotate around={180:(2.5,-0.5)}] plot [smooth cycle,tension=.7] coordinates {(1.85,0.15) (3.25,0.05) (1.95,-1.25) } ;
				\draw[thick,color=blue!70!black, rotate around={90:(2.5,-0.5)}] plot [smooth cycle,tension=.7] coordinates {(1.75,0.25) (3.35,0.15) (1.85,-1.35) } ;
				\draw[thick,color=orange!70!black,dashed, rotate around={-90:(2.5,-0.5)}] plot [smooth cycle,tension=.7] coordinates {(1.75,0.25) (3.35,0.15) (1.85,-1.35) } ;
			\end{tikzpicture}
		\end{center}
		\caption{The marginal scenario representation of the $\mathbf{4}$-Specker scenario, i.e.~$(4,3)$ symmetric marginal scenario. There are four dichotomic measurements $X = \{A_1, A_2, A_3, A_4\}$ which are triple-wise compatible, so that $O = \{0,1\}$, and $\mathcal{M} = \{\{A_1, A_2, A_3\}, \{A_1, A_2, A_4\}, \{A_1, A_3, A_4\}, \{A_2, A_3, A_4\}\}$.}
		\label{fig:43}
	\end{figure}
	
	From the very definition of the scenario it immediately follows that every quantum empirical model in $(X,O,\M)$ has a realisation in terms of deterministic non-contextual hidden variable models, since in quantum theory a set of pairwise compatible measurements is jointly measurable. Here we show that the set of almost quantum models has a similar behaviour. The proof relies on showing that every facet of the classical polytope corresponds to a CE inequality, each of which is satisfied by both quantum and almost quantum models.  
	
	Let us introduce some extra notation. The measurements are labelled by $X=\{1, 2, \dots, n\}$. The events in $H \coloneqq H[X,O,\M]$ are denoted by $(a_{i_1},\dots, a_{i_{n-1}})$, where $\{i_1, \dots, i_{n-1}\} \in \M$ and $a_{i_k} \in O$ is the outcome of measurement $i_k$. By $(a_{i_1},\dots, a_{i_k})$, we denote a generalized event with the remaining \narrowmath{n - k - 1} measurements in the context $\{i_1, \dots, i_{n-1}\}$ disregarded. A probabilistic model%
	\footnote{For simplicity we will denote by $p(a_{i_1}, \dots, a_{i_{k}})$ the conditional probability distribution \mbox{$p(a_{i_1}, \dots, a_{i_{k}} \mid i_1 , \dots , i_{k})$}. Similarly, the events $(a_{i_1}, \dots, a_{i_{n-1}} \mid i_1 , \dots , i_{n-1})$ in $H[X,O,\M]$ are simply denoted by $(a_{i_1}, \dots, a_{i_{n-1}})$.} 
	$\vec{p}$ is then a family of probabilities indexed by the contexts and outcomes of the measurements in the respective context
	\begin{equation}
		\label{eq:VecSp}
		\vec{p} = \bigl\{ p(a_{i_1}, \dots, a_{i_{n-1}}) \,:\, \text{$\Omega \coloneqq \{i_1, \dots, i_{n-1}\} \in \M$, $(a_{i_1}, \dots, a_{i_{n-1}}) \in O^{\Omega}$} \bigr\}.
	\end{equation}
	In this notation, the no-disturbance condition guarantees that the marginal distribution ${p(a_{i_1}, \dots, a_{i_k})}$ is independent of the choice of context from which it is computed.
	
	The rest of this appendix is dedicated to the proof of the theorem, which can written in the language of $n$-Specker scenarios as follows.
	
	\addtocounter{thmrepeat}{+2}
	\begin{thmrepeat}
		\label{thm:(n,n-1)_v2}
		Consider a binary $n$-Specker scenario $H$ with $n \geq 2$. Then the sets $\C(H)$, $\Q(H)$, $\Q_1(H)$ and $\CE(H)$ all coincide.
	\end{thmrepeat}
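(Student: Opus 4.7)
The inclusions $\C(H) \subseteq \Q(H) \subseteq \Q_1(H) \subseteq \CE(H)$ follow directly from the definitions, so it suffices to prove $\CE(H) \subseteq \C(H)$. The base case $n = 2$ is immediate: the scenario comprises a single maximal context, so any probabilistic model is already a joint distribution and hence classical. For $n \geq 3$, my plan is to extend any $p \in \CE(H)$ to a genuine joint distribution on $\{0,1\}^n$ by Fourier inversion, using the CE inequalities to secure non-negativity.

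Working in correlator coordinates $\ev{X_S} := \sum_{\vec{a}} p(\vec{a}) \prod_{i \in S}(-1)^{a_i}$ for $S \subseteq \{1, \ldots, n\}$ with $|S| \leq n-1$, the no-disturbance condition guarantees that each $\ev{X_S}$ is well-defined, being computable from the marginal on any context $\Omega \supseteq S$. A classical realization of $p$ amounts to choosing a value $t := \ev{X_{\{1, \ldots, n\}}} \in [-1, 1]$ so that the Fourier-inverted distribution
\[
P(\vec{a}) \;=\; \frac{1}{2^n} \Bigl( \sum_{|S| \leq n-1} \ev{X_S} \prod_{i \in S}(-1)^{a_i} \;+\; (-1)^{\sum_i a_i}\, t \Bigr)
\]
is non-negative for every $\vec{a} \in \{0,1\}^n$. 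Splitting $\vec{a}$ by the parity of $\sum_i a_i$, the existence of an admissible $t$ reduces to the finite collection of linear inequalities $Q(\vec{a}) + Q(\vec{b}) \geq 0$, ranging over opposite-parity pairs $\vec{a}, \vec{b}$, where $Q$ denotes the above expression with $t$ omitted. Each $Q(\vec{a})$ can then be rewritten as a linear combination of (generalized) event probabilities of $H$ picked out by outcome patterns compatible with $\vec{a}$.

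The main technical step is to identify each of these opposite-parity inequalities as a consequence of a CE inequality on $H$. For a fixed pair $(\vec{a}, \vec{b})$, I would construct an explicit set $\cS$ of pairwise exclusive (possibly generalized) events of $H$ whose CE constraint $\sum_{v \in \cS} p(v) \leq 1$ expands back to the required inequality $Q(\vec{a}) + Q(\vec{b}) \geq 0$. Because $\vec{a}$ and $\vec{b}$ differ in an odd number of coordinates, any event of $H$ disagreeing with $\vec{a}$ somewhere and any event disagreeing with $\vec{b}$ somewhere must share at least one measurement on which they assign different outcomes, furnishing the needed pairwise exclusivity. I expect the adaptive measurement protocols described in section~\ref{se:ebh} to be essential here, since they supply precisely the additional hyperedges (and generalized-event exclusivity relations) that go beyond the natural $(n-1)$-contexts of $\M$.

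The main obstacle is ensuring that the candidate $\cS$ one writes down is genuinely a pairwise-exclusive set of generalized events of $H$ and that the resulting CE inequality carries exactly the coefficients matching $Q(\vec{a}) + Q(\vec{b}) \geq 0$, uniformly in $n$. If the explicit combinatorial construction proves unwieldy, a viable fallback is the dual facet-enumeration strategy flagged in the introduction to this appendix: classify the facets of $\C(H)$ directly --- most naturally by induction on $n$ via restriction to smaller binary $n'$-Specker subscenarios --- and pair each facet with a CE inequality. Either route establishes $\CE(H) \subseteq \C(H)$ and thereby collapses the entire chain of sets, proving the theorem.
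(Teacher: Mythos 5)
Your reduction of classicality to the opposite-parity inequalities is correct and is in fact a cleaner route to one half of the paper's argument. The inequalities $Q(\vec{a})+Q(\vec{b})\geq 0$ you obtain are exactly the paper's family $\I_n$: an opposite-parity pair $(\vec{a},\vec{b})$ is precisely a pair $(\mathbf{s},\mathbf{o})$, $(\bar{\mathbf{s}},\mathbf{o})$ with $|S|$ odd, and the undetermined top Fourier coefficient $t$ cancels in the sum, recovering equation~\eqref{def:oneineq}. Moreover, your derivation shows \emph{directly} that a no-disturbance model satisfies all of these inequalities if and only if it extends to a joint distribution on $O^X$, i.e.\ is classical. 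The paper establishes that same equivalence (Lemma~\ref{prop:PisC}, $\cP=\C(H)$) by dimension counting, facet enumeration and McMullen's upper bound theorem for cyclic polytopes; your Fourier-inversion argument would render that machinery unnecessary. (A minor slip: for $n=2$ the scenario has two singleton maximal contexts, not one, but every model is still classical via the product distribution.)

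The genuine gap is in the step you yourself flag as the main technical one, and the single concrete idea you offer for it is false. You assert that, because $\vec{a}$ and $\vec{b}$ have opposite parity, any event disagreeing with $\vec{a}$ somewhere and any event disagreeing with $\vec{b}$ somewhere must be mutually exclusive. Counterexample in the binary $3$-Specker scenario with $\vec{a}=(0,0,0)$ and $\vec{b}=(1,1,1)$: the events $(10|12)$ and $(10|13)$ each differ from $\vec{a}$ (at $A_1$) and from $\vec{b}$ (at $A_2$, respectively $A_3$), yet the only measurement they share is $A_1$, on which they assign the \emph{same} outcome, so they are not exclusive. Hence no pointwise disagreement condition can certify pairwise exclusivity; one must exhibit a specific regrouping of $1-\sum_{\alpha\neq\vec{a},\vec{b}}f(\alpha)$ into generalized events that are mutually exclusive and carry the correct coefficients. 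That is exactly the content of the paper's Lemma~\ref{prop:IisCE}, which builds the exclusive sets $\cD_m^{\mathbf{s}}$ by the two-step recursion~\eqref{eq:cand} and verifies exclusivity via an auxiliary inductive property tracking how each event relates to both $\mathbf{s}$ and $\bar{\mathbf{s}}$. Your fallback (facet enumeration) does not avoid this: it reproduces the paper's Lemma~\ref{prop:PisC} but still requires identifying each facet inequality as a CE inequality. Until a construction of this kind is supplied, $\CE(H)\subseteq\cP$ is unproven and the theorem does not follow.
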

	
	First of all notice that there are inclusions $\C(H) \subseteq \Q(H) \subseteq \Q_1(H) \subseteq \CE(H)$, which hold in general for any contextuality scenario H~\citep{AFLS}. Therefore, in order to prove theorem~\ref{thm:(n,n-1)_v2}, we only need to show $\CE(H) \subseteq \C(H)$. This can be broken down into three parts. Firstly, we define a finite set of linear inequality constraints $\I_n$ on $\G(H)$ and the corresponding polytope $\cP$, which satisfies them. Then we show in lemma~\ref{prop:IisCE} that all these inequalities are actually CE inequalities, so that $\CE(H) \subseteq \cP$ holds. The proof concludes with lemma~\ref{prop:PisC}, which establishes the equality of $\C(H)$ and $\cP$.
	
	Consider a set of measurements $S \subseteq X$ with $k \coloneqq |S|$. Let $\mathbf{s} = \{a_r : r \in S\} \in O^S $ be a collection of outcomes of the $k$ measurements $S$. Similarly $\mathbf{o} = \{a_{r} : r \in X \setminus S\} \in O^{X \setminus S}$ is a collection of outcomes of the $n-k$ measurements $X \setminus S$. We define the set $\I_n$ in terms of linear functionals $\I_n^{k, S, \mathbf{s}, \mathbf{o}}$ on the vector space in which the probabilistic models are embedded. These depend on the choice of $S$, $\textbf{s}$ and $\textbf{o}$ in the following way.
	\begin{equation}
		\label{def:oneineqgen}
		\I_n^{k, S, \mathbf{s}, \mathbf{o}} \coloneqq p(\mathbf{o}) +  \sum_{j=1}^{k-1} (-1)^j \sum_{\{i_1,\dots, i_j\} \subseteq S} p(a_{i_1},\dots, a_{i_j}, \mathbf{o}),
	\end{equation}
	with $p(\mathbf{o}) = 1$ whenever $k=n$. We use the notation $\I_n^k$ for the set of corresponding inequalities with a given $k$.
	\begin{align}\label{def:familyineq}
		\I_n^k \coloneqq \left \{\I_n^{k,S, \mathbf{s},\mathbf{o}} \geq 0 \,:\, \text{$\mathbf{s} \in O^{S}$, $\mathbf{o} \in O^{X \setminus S}$, $S \subseteq X$, $|S| = k$} \right \}
	\end{align}
	We define the full set of inequalities, $\I_n$, as the union of all $\I_n^k$ for odd $k$ between $1$ and $n$.
	
	In fact, the right hand side of equation~\eqref{def:oneineqgen} can be expressed in simpler terms, as we illustrate next. Given a probabilistic model $\vec p$, which specifies a probability distribution for each proper subset of $X$, there is a function $f \colon O^X \to \mathbb{R}$ that recovers $\vec p$ by marginalization%
	\footnote{The marginal of $f$ over $\Omega \subseteq X$ is computed by summing the values of $f$ over $O^{\Omega}$, which yields a function $O^{X \setminus \Omega} \to \mathbb{R}$.}
	~\cite{AB}. That is, by marginalizing $f$ over any measurement $i$, we obtain the original probabilistic model for the context $X \setminus \{i\}$
	Even though this function is not unique, we can write $\I_n^{k, S, \mathbf{s}, \mathbf{o}}$ in terms of $f$ uniquely. In particular, we have
	\begin{equation}
		\label{def:oneineq}
		\I_n^{k, S, \mathbf{s}, \mathbf{o}} = f(\mathbf{s}, \mathbf{o}) +f(\bar{\mathbf{s}}, \mathbf{o}).
	\end{equation}
	Notice that one can choose $f$ to be a probability distribution over $O^X$ if and only if $\vec p$ is a classical probabilistic model.
	
	\begin{ex}
		\label{ex:3-Specker}
		As an example, consider the $3$-Specker scenario, also known as the Specker' triangle. The two sets of inequalities that give rise to $\I_3$ are
		\begin{align}
			\I_3^1 &= \left \{ p (a_i, a_j) \geq 0 \,:\, \text{$\{i,j\} \in \{1,2,3\}$, $(a_i,a_j) \in O^{\{i,j\}}$} \right \}, \\
			\I_3^3 &= \left\{ 1-p(a_1) - p(a_2) - p(a_3) + p(a_1,a_2) + p(a_1,a_3)+p(a_2,a_3) \geq 0 \,:\, a_i \in O^{\{i\}} \right\}.
		\end{align}
		The first ones are just the positivity constraints, which are always CE inequalities. Likewise, inequalities in $\I_3^3$ can be written in a form that is manifestly CE, as
		\begin{equation}
			\I_3^3 = \left\{ 1 \geq p(a_1,\bar{a}_2) + p(a_2,\bar{a}_3) + p(\bar{a}_1,a_3) \,:\, a_i \in O^{\{i\}} \right\},
		\end{equation}
		where $\bar{a}_i \coloneqq a_i + 1 \pmod 2$. A generalization of the fact that $\I_3$ consists of CE inequalities to all $\I_n$ is the content of the next lemma.
	\end{ex}
	
	\begin{lemma}
		\label{prop:IisCE}
		Every element of $\I_n$ defined as above is a CE inequality.
	\end{lemma}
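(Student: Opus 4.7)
The plan is to use the identity $\I_n^{k, S, \mathbf{s}, \mathbf{o}} = f(\mathbf{s}, \mathbf{o}) + f(\bar{\mathbf{s}}, \mathbf{o})$ from equation~\eqref{def:oneineq} to exhibit a collection $\cS$ of (generalized) events such that $\sum_{v \in \cS} p(v) = 1 - \I_n^{k, S, \mathbf{s}, \mathbf{o}}$; then $\I_n^{k, S, \mathbf{s}, \mathbf{o}} \geq 0$ is manifestly equivalent to the CE bound $\sum_{v \in \cS} p(v) \leq 1$, provided $\cS$ consists of pairwise exclusive events. Marginalising $f$ over any single measurement yields $\sum_{a \in O^X} f(a) = 1$, so
\begin{equation*}
    1 - \I_n^{k, S, \mathbf{s}, \mathbf{o}} \;=\; \sum_{\mathbf{o}' \in O^{X \setminus S},\, \mathbf{o}' \neq \mathbf{o}} p(\mathbf{o}') \;+\; \sum_{\mathbf{s}' \in O^S \setminus \{\mathbf{s}, \bar{\mathbf{s}}\}} f(\mathbf{s}', \mathbf{o}),
\end{equation*}
where the first piece aggregates those $a \in O^X$ with $a|_{X \setminus S} \neq \mathbf{o}$, rewritten via marginalisation as probabilities on the (sub)context $X \setminus S$. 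These events $(\mathbf{o}'|X \setminus S)$ all share one context, hence are automatically pairwise exclusive among themselves.

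For the second piece, I would choose any perfect matching $M$ of the subgraph of the hypercube $Q_k$ on $O^S$ obtained by deleting the antipodal vertices $\mathbf{s}, \bar{\mathbf{s}}$. Each matched pair $\{\mathbf{s}',\, \mathbf{s}' \oplus \mathbf{e}_i\}$ (differing in a single coordinate $i \in S$) contributes, by no-disturbance,
\begin{equation*}
    f(\mathbf{s}', \mathbf{o}) + f(\mathbf{s}' \oplus \mathbf{e}_i, \mathbf{o}) \;=\; p\bigl(\mathbf{s}'|_{S \setminus \{i\}},\, \mathbf{o}\bigr),
\end{equation*}
a probability of a generalized event on the $(n-1)$-context $X \setminus \{i\} \in \M$. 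Such an $M$ exists precisely because $k$ is odd: for $k = 1$ the vertex set is empty, and for odd $k \geq 3$ one can extract $M$ from any Hamiltonian cycle of $Q_k$ (e.g., a Gray-code cycle). Since $\mathbf{s}$ and $\bar{\mathbf{s}}$ lie in different bipartition classes of $Q_k$ whenever $k$ is odd, they sit at positions of opposite parity along such a cycle, so removing them splits the cycle into two arcs of even vertex count, each of which is a path admitting a perfect matching via alternating edges.

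The main delicate step is verifying pairwise exclusivity across the full collection of events. Two $(\mathbf{o}'|X \setminus S)$ events, and pairs consisting of one such event together with a matching-induced event on $X \setminus \{i\} \supseteq X \setminus S$, are exclusive because the restrictions to $X \setminus S$ are $\mathbf{o}' \neq \mathbf{o}$. Two matching-induced events sharing the same differing coordinate $i$ lie in the common context $X \setminus \{i\}$ with distinct outcomes, hence are exclusive. The nontrivial case is two matching-induced events with distinct indices $i \neq j$: their contexts intersect on $X \setminus \{i, j\}$, and if their outcomes agreed there then the representatives $\mathbf{s}', \mathbf{s}'' \in O^S$ would differ only within $\{i, j\}$, forcing $\mathbf{s}'' \in \{\mathbf{s}',\, \mathbf{s}' \oplus \mathbf{e}_i,\, \mathbf{s}' \oplus \mathbf{e}_j,\, \mathbf{s}' \oplus \mathbf{e}_i \oplus \mathbf{e}_j\}$; each possibility forces the pair $\{\mathbf{s}'', \mathbf{s}'' \oplus \mathbf{e}_j\}$ to share a vertex with $\{\mathbf{s}', \mathbf{s}' \oplus \mathbf{e}_i\}$, contradicting the matching property of $M$. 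Thus $\cS$ is pairwise exclusive, establishing that $\I_n^{k, S, \mathbf{s}, \mathbf{o}} \geq 0$ is a CE inequality.
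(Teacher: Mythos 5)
Your proof is correct, but it takes a genuinely different route from the paper's. The paper proves Lemma~\ref{prop:IisCE} by a double induction on $n$ (using IH$_{m-1}$ and IH$_{m-2}$), with a strengthened induction hypothesis carrying five auxiliary properties and a recursively constructed exclusive set $\cD_m^{\mathbf{s}}$ built by concatenating outcomes of two extra measurements onto $\cD_{m-2}^{\mathbf{s}_{m-2}}$; the $k<n$ cases are then reduced to smaller scenarios by appending one outcome. You instead give a direct, non-inductive construction valid for all odd $k$ at once: starting from the identity $\I_n^{k,S,\mathbf{s},\mathbf{o}} = f(\mathbf{s},\mathbf{o})+f(\bar{\mathbf{s}},\mathbf{o})$ and normalization of $f$, you split the complementary mass into generalized events on $X\setminus S$ (automatically exclusive, sharing a sub-context) plus a sum over $O^S\setminus\{\mathbf{s},\bar{\mathbf{s}}\}$, which you resolve into maximal-context events via a perfect matching of the hypercube $Q_k$ punctured at the two antipodes. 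The existence of the matching (via a Gray-code Hamiltonian cycle and the parity argument) is exactly where the restriction to odd $k$ enters, which makes the role of that hypothesis far more transparent than in the paper's treatment, and your case analysis for exclusivity of two matching-induced events with distinct deleted coordinates $i\neq j$ is complete and correct. What the paper's approach buys is a single canonical, explicitly recursive decomposition $\cD_m^{\mathbf{s}}$ (useful for tracking the auxiliary properties (c)--(e) needed elsewhere in the induction); what yours buys is brevity, uniformity in $k$, and a conceptual explanation of the parity constraint. Both arguments rest on the unproven-but-asserted identity in equation~\eqref{def:oneineq} (a short inclusion--exclusion computation), so you are on equal footing there.
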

	
	\begin{proof}
		We prove this result by induction on $n$. However, in order to perform the inductive step, we need the induction hypothesis to be slightly stronger than just the statement of lemma~\ref{prop:IisCE} for a particular value of $n$.
		
		\begin{inductionhypothesis}
			\hypertarget{IH}
			 For all $m$,
			\begin{compactenum}
				\item[(a)] \hypertarget{IH:a}
					every element of $\I_m^k$ is a CE inequality for any odd $k$ between 1 and $m-1$.
			\end{compactenum}
			Moreover, if $m$ is odd, then given any inequality%
			\footnote{Note that when $k$ is equal to $n$ in $\I_n^{k,S,\textbf{s},\textbf{o}}$, $S$ has to be $X$ itself and $\textbf{o}$ contains no outcomes.} 
			$\I_{m}^{m, X, \textbf{s}, \emptyset} \geq 0$ 		
			in $\I_m^m$, there exists a set $\cD_m^{\textbf{s}}$ of generalized events 
			(which by a slight abuse of notation we refer to simply as events), such that the following properties hold:
			\begin{compactenum}
				\item[(b)] \hypertarget{IH:b}
					 One can express $\I_m^{m,X,\textbf{s},\emptyset}$ as
					\begin{equation}
						\label{eq:IH2}
						\I_m^{m,X,\textbf{s},\emptyset} = 1 - \sum_{\alpha \in \cD_m^{\textbf{s}}} p(\alpha).
					\end{equation}
					
				\item[(c)] \hypertarget{IH:c} Each of the events in $\cD_m^{\textbf{s}}$ involves outcomes of at most $m-1$ measurements among $X$.
					
				\item[(d)] \hypertarget{IH:d} The set $\cD_m^{\textbf{s}}$ consists of pairwise exclusive events.
					
				\item[(e)] \hypertarget{IH:e} Each event $g$ in the set $\cD_m^{\textbf{s}}$ has the following property. There exists a measurement in $X$ whose outcome for $g$ is different from the one for $\textbf{s}$, and another measurement in $X$ whose outcome for $g$ is different from the one for $\bar{\textbf{s}}$.
			\end{compactenum}
		\end{inductionhypothesis}
	
		Notice that the statement that all elements of $\I_m$ are CE inequalities is equivalent to property~\hyperlink{IH:a}{(a)} whenever $m$ is even and equivalent to the conjunction of properties \hyperlink{IH:a}{(a)} through \hyperlink{IH:d}{(d)} whenever $m$ is odd. Therefore, Lemma~\ref{prop:IisCE} follows once we show that IH\subscript{2} and IH\subscript{3} both hold (base case), and that IH\subscript{m-2} with IH\subscript{m-1} together imply IH\subscript{m} (inductive step).
	
		\begin{basecase}
			The fact that IH\subscript{2} holds follows because $\I_2$ consists of positivity constraints only. Since 3 is odd, we need to check all five properties in order to establish IH\subscript{3}. Let's analyze them one by one.
			\begin{compactenum}
				\item[BC(a)] As we have seen in example~\ref{ex:3-Specker}, $\I_3^1$ consists of positivity constraints, which are CE inequalities.
				
				\item[BC(b)] \begin{sloppypar}Following the discussion in example~\ref{ex:3-Specker}, we can define 
				\begin{equation}
					\cD_3^{(a_1,a_2,a_3)} \coloneqq \{(a_1,\bar{a}_2), (a_2,\bar{a}_3), (\bar{a}_1,a_3)\},
				\end{equation}
				whence equation~\eqref{eq:IH2} holds for $m=3$ and $\textbf{s} = (a_1,a_2,a_3)$.\end{sloppypar}
				
				\item[BC(c)] This property is immediate from the definition of $\cD_3^{(a_1,a_2,a_3)}$ above.
				
				\item[BC(d)] For every pair of events in $\cD_3^{(a_1,a_2,a_3)}$, there is a measurement with different outcome for each of the two events in the pair. Therefore, the events in $\cD_3^{(a_1,a_2,a_3)}$ are pairwise exclusive.
								
				\item[BC(e)] Each event in $\cD_3^{(a_1,a_2,a_3)}$ is of the form $g=(a_i, \bar{a}_j)$, for some $i$ and $j$ in $\{1,2,3\}$. Hence, measurement $i$ assigns different outcomes to $g$ and $\bar{\textbf{s}}$, whereas measurement $j$ assigns different outcomes to $g$ and ${\textbf{s}}$. Property \hyperlink{IH:d}{(e)} then follows.
			\end{compactenum}
		\end{basecase}
	
		\begin{inductivestep}
			In this part of the proof of lemma~\ref{prop:IisCE}, we aim to show that the conjunction of IH\subscript{m-2} and IH\subscript{m-1} implies IH\subscript{m}. Let's check that the properties \hyperlink{IH:a}{(a)} through \hyperlink{IH:d}{(e)} are satisfied if we assume IH\subscript{m-2} and IH\subscript{m-1} in turn.
			\begin{compactenum}
				\item[IS(a)] Consider an arbitrary inequality $\I_m^{k, S, \textbf{s},\textbf{o}} \geq 0$ from $\I_m^k$, where $k$ is odd and smaller than $m$. Since $\textbf{o}$ is non-empty, we can choose a measurement $i \in X \setminus S$ and one of its outcomes $a_i \in \textbf{o}$. If we define $\textbf{o}' \coloneqq \textbf{o} \setminus \{a_i\}$, then the inequality $\I_{m-1}^{k, S, \textbf{s},\textbf{o}'} \geq 0$ is an element of $\I_{m-1}^k$. Notice that we have restricted $X$ to $X \setminus \{i\}$, but $S$ can remain unchanged, since we have chosen $i$ in a way such that $S \subseteq X \setminus \{i\}$. 
				
				It follows from IH\subscript{m-1} that $\I_{m-1}^{k, S, \textbf{s},\textbf{o}'} \geq 0$ is a CE inequality. Moreover, appending an outcome $a_i$ of one extra measurement $i$ to a set of pairwise exclusive events doesn't affect their pairwise exclusivity. Therefore, $\I_m^{k, S, \textbf{s},\textbf{o}} \geq 0$ is also a CE inequality, which completes the proof of the implication IH\subscript{m-1}$\implies$IH\subscript{m}(a).
			\end{compactenum}			
			In order to address properties \hyperlink{IH:a}{(b)} through \hyperlink{IH:d}{(e)}, we first present a candidate for $\cD_m^{\textbf{s}_m}$, given an arbitrary inequality $\I_{m}^{m, X_m, \textbf{s}_m, \emptyset} \geq 0$ in $\I_m^m$. It is defined recursively, by constructing $\cD_m^{\textbf{s}_m}$ from $\cD_{m-2}^{\textbf{s}_{m-2}}$, which corresponds to the inequality $\I_{m-2}^{m-2, X_{m-2}, \textbf{s}_{m-2}, \emptyset} \geq 0$ in $\I_{m-2}^{m-2}$. The labels for the measurements and events in the two relevant scenarios, $(X_m,O,\M_m)$ and $(X_{m-2},O,\M_{m-2})$, are
			\begin{align}
				X_m &= \{1, \dots, m\} & \textbf{s}_m &= \left( a_1,\dots, a_m \right) \\
				X_{m-2} &= \{3, \dots, m\} & \textbf{s}_{m-2} &= \left( a_3,\dots, a_m \right),
			\end{align}
			so that $\textbf{s}_{m-2}$ is just $\textbf{s}_{m}$ with the (outcomes of the) first two measurements omitted. With these defined, we can proceed to define $\cD_m^{\textbf{s}_m}$ as
			\begin{equation}
				\label{eq:cand}
				\cD_{m}^{\textbf{s}_m} \coloneqq \bigl\{(\bar{a}_1, a_2), (\bar{a}_2, \textbf{s}_{m-2}) , (a_1, \bar{\textbf{s}}_{m-2}) \bigr\} \cup \Bigl[\bigl\{(a_1, a_2), (a_1, \bar{a}_2), (\bar{a}_1, \bar{a}_2) \bigr\} \times \cD_{m-2}^{\textbf{s}_{m-2}} \Bigr].
			\end{equation}
			In the above expression, the Cartesian product has the effect of concatenating the corresponding strings of measurement outcomes. Now we can move on to proving that $\cD_m^{\textbf{s}_m}$ satisfies properties \hyperlink{IH:a}{(b)} through \hyperlink{IH:d}{(e)}.
			\begin{compactenum}
				\item[IS(b)] Recall the expression~\eqref{def:oneineq}, which corresponds to
				\begin{equation}
					\I_m^{m, X_m, \mathbf{s}_m, \emptyset} = f(\mathbf{s}_m) + f(\bar{\mathbf{s}}_m)
				\end{equation}
				in our case. If we define $\cB_m^{\textbf{s}_m} \coloneqq O^m \setminus \{ \textbf{s}_m, \bar{\textbf{s}}_m \}$, then we have%
				\footnote{This is a consequence of the function $f$ being normalized, which follows from the normalization of the probabilistic model $p$ obtained from $f$ by marginalization.}
				\begin{equation}
					\label{eq:expansion}
					\I_m^{m, X_m, \mathbf{s}_m, \emptyset} = 1 - \sum_{\mathbf{\alpha} \in \cB_m^{\textbf{s}_m}} f(\mathbf{\alpha}).
				\end{equation}     
				In order to match equation~\eqref{eq:expansion} with equation~\eqref{eq:IH2}, we identify the sum over $\cB_m^{\textbf{s}_m}$ with a sum over $\cD_m^{\textbf{s}_m}$ via the following:
				\begin{align}
					\label{eq:Step1} \scalemath{0.75}{\sum_{\mathbf{\alpha} \in \cB^{\mathbf{s}_m}_{m}}} f(\mathbf{\alpha}) &= \scalemath{0.75}{\sum_{\mathbf{\alpha} \in O^{m-2}}} \big[ f(a_1,\bar{a}_2,\mathbf{\alpha}) + f(\bar{a}_1,a_2,\mathbf{\alpha}) \big] + \scalemath{0.75}{\sum_{{\mathbf{\alpha} \in \mathcal{B}^{\mathbf{s}_{m-2}}_{m-2}}}} \big[ f(a_1,a_2,\mathbf{\alpha}) + f(\bar{a}_1,\bar{a}_2,\mathbf{\alpha}) \big] \\[-1pt] \nonumber
					& \qquad + f(\bar{a}_1,\bar{a}_2,a_3, \dots, a_m) + f(a_1,a_2,\bar{a}_3,\dots, \bar{a}_m) \\[8pt]
					\label{eq:Step2} &= \scalemath{0.75}{\sum\limits_{\mathbf{\alpha} \in \mathcal{B}^{\mathbf{s}_{m-2}}_{m-2}}} \left[ f(a_1,a_2,\mathbf{\alpha}) + f(a_1,\bar{a}_2,\mathbf{\alpha}) + f(\bar{a}_1,\bar{a}_2,\mathbf{\alpha}) \right] \\[-1pt] \nonumber
					& \qquad + f(\bar{a}_1,a_2) + f(\bar{a}_2,a_3, \dots, a_m) + f(a_1, \bar{a}_3, \dots, \bar{a}_m) \\[6pt]
					\label{eq:Step3} &= \scalemath{0.75}{\sum_{\mathbf{\beta} \in \cD_{m}^{\mathbf{s}_m}}} f(\mathbf{\beta}) \\
					\label{eq:Step4} &= \scalemath{0.75}{\sum_{\mathbf{\beta} \in \cD_{m}^{\mathbf{s}_m}}} p(\mathbf{\beta})
				\end{align}
				Equation~\eqref{eq:Step1} is just an expansion of the sum over $\cB_m^{\textbf{s}_m}$ to sums over $\cB_{m-2}^{\textbf{s}_{m-2}}$. In equation~\eqref{eq:Step2} we rearrange the terms and perform marginalization. Equation~\eqref{eq:Step3} uses the fact that the set of events being summed over is precisely $\cD_{m}^{\textbf{s}_m}$. Moreover, we use property \hyperlink{IH:b}{(b)} for $\cD_{m-2}^{\textbf{s}_{m-2}}$, which is a part of IH\subscript{m-2}. Finally, equation~\eqref{eq:Step4} replaces $f$ with $p$. This is possible, because  $\cD_{m}^{\textbf{s}_m}$ satisfies property~\hyperlink{IH:c}{(c)}, which is justified in the following step.
				
				\item[IS(c)] Since $\cD_{m-2}^{\textbf{s}_{m-2}}$ involves outcomes of no more than \narrowmath{m-3} measurements, it is easy to verify that $\cD_{m}^{\textbf{s}_{m}}$ refers to events with at most \narrowmath{m-1} measurement outcomes.
				
				\item[IS(d)] The fact that $\cD_{m-2}^{\textbf{s}_{m-2}}$ satisfies both properties \hyperlink{IH:d}{(d)} and \hyperlink{IH:e}{(e)} implies that $\cD_{m}^{\textbf{s}_{m}}$ defined by equation~\eqref{eq:cand} consists of pairwise exclusive generalized events. One can take a step further and take $\cD_{m-2}^{\textbf{s}_{m-2}}$ to be the union of the refinements of the generalized events that compose it, without changing its relevant properties. By doing this, it follows that $\cD_{m}^{\textbf{s}_{m}}$ is a set of pairwise exclusive events in $H[X_m,O,\M_m]$.
				
				\item[IS(e)] Direct inspection shows that every event in $\cD_{m}^{\textbf{s}_{m}}$ comprises a measurement that assigns a different outcome to $\textbf{s}_m$, and one that assigns a different outcome to $\bar{\textbf{s}}_m$. Hence, $\cD_{m}^{\textbf{s}_{m}}$ satisfies property \hyperlink{IH:e}{(e)} as well.			
			\end{compactenum}
		Thus the proof that IH\subscript{m-2} and IH\subscript{m-1} imply IH\subscript{m} is complete. As a consequence, the induction hypothesis holds for all integers greater than 1, which proves lemma~\ref{prop:IisCE}.
		\end{inductivestep}
	\vspace{-25pt}
	\end{proof}
	
	An immediate corollary of lemma~\ref{prop:IisCE} is the inclusion $\CE(H) \subseteq \cP$. The rest of the proof of theorem~\ref{thm:(n,n-1)_v2} consists of demonstrating $\cP = \C(H)$.

	\begin{lemma}\label{prop:PisC}
		The set of probabilistic models satisfying all inequalities in $\I_n$ coincides with the polytope of classical probabilistic models, i.e.~$\cP=\C(H)$.
	\end{lemma}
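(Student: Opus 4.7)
The plan is to establish the nontrivial inclusion $\cP \subseteq \C(H)$, since $\C(H) \subseteq \cP$ is immediate from the chain $\C(H) \subseteq \Q(H) \subseteq \Q_1(H) \subseteq \CE(H) \subseteq \cP$ together with lemma~\ref{prop:IisCE}. My strategy is to translate the question into a one-dimensional feasibility problem on the hypercube $O^X$, and then check that the resulting feasibility condition is exactly what $\I_n$ encodes.

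First I would recall that $\vec{p}$ lies in $\C(H)$ precisely when there is a \emph{non-negative} function $f \colon O^X \to \mathbb{R}$ whose $(n-1)$-marginals reproduce $\vec{p}$, because $\C(H)$ is the convex hull of the deterministic empirical models and each deterministic model is a delta function on some element of $O^X$. By no-disturbance at least one (possibly signed) such $f$ always exists, so the question reduces to deciding whether the affine fibre of preimages of $\vec{p}$ under marginalization meets the non-negative orthant.

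The key combinatorial step is to identify the kernel of the marginalization map. A function $g$ in the kernel must satisfy, for every coordinate $i$ and every fixing of the remaining coordinates, that $g(\ldots,0,\ldots) + g(\ldots,1,\ldots) = 0$. This forces $g$ to flip sign under each single coordinate flip, so $g(\mathbf{a}) = c \, (-1)^{|\mathbf{a}|}$ for some scalar $c$; the kernel is one-dimensional, spanned by $\chi(\mathbf{a}) \coloneqq (-1)^{|\mathbf{a}|}$. Fixing any representative $f$, every preimage has the form $f + \lambda \chi$ with $\lambda \in \mathbb{R}$. Demanding $f + \lambda \chi \geq 0$ pointwise splits by parity: I need $\lambda \geq -f(\mathbf{a})$ for even-weight $\mathbf{a}$ and $\lambda \leq f(\mathbf{a})$ for odd-weight $\mathbf{a}$. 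Such a $\lambda$ exists if and only if $f(\mathbf{a}) + f(\mathbf{b}) \geq 0$ for every even-weight $\mathbf{a}$ and every odd-weight $\mathbf{b}$.

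The final step is to match this family with $\I_n$. Using the reformulation $\I_n^{k,S,\mathbf{s},\mathbf{o}} = f(\mathbf{s},\mathbf{o}) + f(\bar{\mathbf{s}},\mathbf{o})$ from equation~\eqref{def:oneineq} with odd $k$, the pair $\bigl((\mathbf{s},\mathbf{o}),(\bar{\mathbf{s}},\mathbf{o})\bigr)$ ranges, as $S$, $\mathbf{s}$, $\mathbf{o}$ vary, over all pairs of strings in $O^X$ that differ in an odd number of positions, i.e., all pairs of opposite Hamming parity. Thus $\vec{p} \in \cP$ is equivalent to LP-feasibility of $\lambda$, which is in turn equivalent to $\vec{p} \in \C(H)$. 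I expect the main obstacle to be executing the kernel calculation cleanly and then verifying that the odd-$k$ restriction in the definition of $\I_n$ (which differs slightly between $n$ even and $n$ odd, since $k = n$ is only included when $n$ is odd) still produces every opposite-parity pair; after that, the equivalence reduces to a straightforward one-dimensional Farkas-type argument.
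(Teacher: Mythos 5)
Your argument is correct, and it takes a genuinely different route from the paper's. The paper proves $\cP=\C(H)$ by polytope combinatorics: it shows both polytopes have dimension $2^n-2$, that each of the $2^{2n-2}$ inequalities in $\I_n$ is saturated by $2^n-2$ vertices of $\C(H)$ and hence supports a facet of it, and then invokes McMullen's upper bound theorem together with Gale's facet count for cyclic polytopes to conclude that $\C(H)$ can have no facets beyond those, so its defining inequalities are exactly $\I_n$. You replace all of this with a one-dimensional Farkas-type analysis: the kernel of the marginalization map $\mathbb{R}^{O^X}\to\G(H)$ is spanned by the parity character $\chi(\mathbf a)=(-1)^{|\mathbf a|}$ (correct for binary outcomes, since vanishing of every single-coordinate marginal forces a sign flip under each coordinate flip), so the fibre over $\vec p$ is the line $f+\lambda\chi$, and a non-negative point on it exists iff $f(\mathbf a)+f(\mathbf b)\ge 0$ for every pair of opposite Hamming parity, which by equation~\eqref{def:oneineq} is exactly the system $\I_n$. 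The matching you flag as a possible obstacle does work out: two strings of opposite parity differ in an odd number $d$ of positions with $1\le d\le n$, the case $d=n$ occurs only when $n$ is odd, which is precisely when $\I_n^n$ is included, and the count of opposite-parity pairs, $2^{n-1}\cdot 2^{n-1}=2^{2n-2}$, agrees with the paper's computation of $|\I_n|$. Your route is more elementary and self-contained, needing only the signed global section from~\cite{AB} (which the paper also uses) and no external polytope theory, and it explains directly why $\I_n$ is the complete inequality description; the paper's route yields the extra structural facts that every inequality in $\I_n$ is facet-defining and that $\C(H)$ saturates the cyclic-polytope facet bound. One small point worth making explicit in a write-up: a non-negative preimage automatically sums to $1$ (its marginals are normalized and $\chi$ sums to zero), so it is genuinely a probability distribution on $O^X$, which is what certifies membership in $\C(H)$.
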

	
	\begin{proof} 
		The proof is divided into three parts, each corresponding to one of the claims below. Firstly, we establish the dimension of the polytopes, showing that it is the same for both polytopes. Next, we show that every facet of $\cP$ contains a facet of $\C(H)$. Finally, we prove that $\C(H)$ does not have any facets \emph{other} than those contained within facets of $\cP$. These facts together imply $\cP=\C(H)$.
	
		\begin{claim}
			\label{prop:C(H)dim}
			The dimension of both $\C(H)$ and $\cP$ is \narrowmath{2^n - 2}.
		\end{claim}
	
	
		Notice that every element of the probabilistic model $p \in \G(H)$ can be written as a linear combination of terms that refer to outcome(s) $0$ only. There are \narrowmath{\sum_{j=1}^{n-1} {{n} \choose{j}} = 2^n-2} such terms, which gives an upper bound on the dimensions of $\G(H)$, $\cP$ and $\C(H)$.
		
		One simple consequence of lemma~\ref{prop:IisCE} is that $\C(H)$ is a subset of $\cP$. Therefore, it suffices to show \narrowmath{\mathrm{dim}(\C(H)) = 2^n-2}.
		Note that the dimension of the classical polytope in an $n$-Specker scenario is precisely one less than the number of degrees of freedom in a joint probability distribution of $n$ binary outcome measurements, i.e.~one less than \narrowmath{2^n-1}~\cite{CollinsGisinNotation}. Hence, the classical polytope has dimension \narrowmath{2^n-2}, as required.
		
		\begin{claim}
			\label{facetinclusion}
			Every facet of $\cP$ contains a facet of $\C(H)$. Moreover, the facets of $\cP$ are in one-to-one correspondence with the inequalities $\I_n$.
		\end{claim}
		
		Generically, given a polytope $\mathcal{R}$, a linear inequality $\mathcal{A} \geq 0$ corresponds to a facet of $\mathcal{R}$ if and only if $\mathrm{dim(\mathcal{R})}$ or more vertices of $\mathcal{R}$ lie within the hyperplane $\mathcal{A} = 0$. Moreover, if the above condition is satisfied, the facet of $\mathcal{R}$ lies within the hyperplane $\mathcal{A} = 0$.
		
		Now, to every facet of $\cP$, we can associate an inequality from $\I_n$. Furthermore, each inequality in $\I_n$ is saturated by a set of exactly \narrowmath{2^n-2} vertices of $\C(H)$, and no two of these sets are the same. This fact can be seen easily by considering the form of inequalities in equation~\eqref{def:oneineq} in terms of $f$.
	
		Since \narrowmath{2^n-2} is also the dimension of $\C(H)$ by claim~\ref{prop:C(H)dim}, each inequality in $\I_n$ is associated with a different facet of $\C(H)$. Therefore, every facet of $\cP$ contains some facet of $\C(H)$, as $\C(H) \subseteq \cP$. This relation between $\C(H)$ and $\cP$ is depicted in figure~\ref{fig:oct}. Moreover, every inequality in $\I_n$ corresponds to a different facet of $\cP$. The number of facets of $\cP$ is thus equal to the number of distinct inequalities in $|\I_n|$, which is
		\begin{equation}
			\vert \I_n \vert = \frac{1}{2} \sum_{k \in \Omega} {{n}\choose{k}} \, 2^k \,2^{n-k} = 2^n \, 2^{n-2} = 2^{2n-2},
		\end{equation}
		where $\Omega$ denotes the set of odd integers between 1 and $n$ and the factor of ${\scriptstyle \nicefrac{1}{2}}$ takes into account the fact that \mbox{$\I_n^{k, S, \mathbf{s}, \mathbf{o}} = \I_n^{k, S, \bar{\mathbf{s}}, \mathbf{o}}$}, so that the corresponding inequality is counted only once.
	
	\begin{figure}[!tbh]
		\begin{center}
			\begin{tikzpicture}
				\draw[thick, color=gray!60!black] (0,0) rectangle (3,3);
				\draw[thick, color=blue!60!black] (1,0) -- (2,0) -- (3,1) -- (3,2) -- (2,3) --  (1,3) -- (0,2) -- (0,1) -- cycle;
				\node[draw=gray!60!black,fill, color=blue!60!black, shape=circle,scale=.5] at (1,0) {};
				\node[draw=gray!60!black,fill, color=blue!60!black, shape=circle,scale=.5] at (2,0) {};
				\node[draw=gray!60!black,fill, color=blue!60!black, shape=circle,scale=.5] at (3,1) {}; 
				\node[draw=gray!60!black,fill, color=blue!60!black, shape=circle,scale=.5] at (3,2) {}; 
				\node[draw=gray!60!black,fill, color=blue!60!black, shape=circle,scale=.5] at (2,3) {}; 
				\node[draw=gray!60!black,fill, color=blue!60!black, shape=circle,scale=.5] at (1,3) {}; 
				\node[draw=gray!60!black,fill, color=blue!60!black, shape=circle,scale=.5] at (0,2) {}; 
				\node[draw=gray!60!black,fill, color=blue!60!black, shape=circle,scale=.5] at (0,1) {};
				\node at (1.5,1.5) {\color{blue!60!black}{$\mathcal{R}'$}};
				\node at (-0.3,3.3) {$\mathcal{R}$};
			\end{tikzpicture}
		\end{center}
		\caption{Two polytopes, $\mathcal{R}$ (gray) and $\mathcal{R}'$ (blue). $\mathcal{R}'$ is contained within $\mathcal{R}$, and each facet of $\mathcal{R}$ contains one of $\mathcal{R}'$. This schematically depicts the relation between the classical polytope $\C(H)$ (which plays the role of $\mathcal{R}'$) and the polytope $\cP$ (which plays the role of $\mathcal{R}$) proven in Claim \ref{facetinclusion}.}
		\label{fig:oct}
	\end{figure}
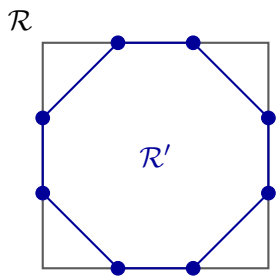
			
		\begin{claim}
			\label{nootherfacets}
			The number of facets of $\C(H)$ does not exceed the number of facets of $\cP$. Together with Claim~\ref{facetinclusion}, this implies that $\C(H)$ does not have any facets apart from those contained within facets of $\cP$. 
		\end{claim}
	
		The number of vertices of $\C(H)$ is $N \coloneqq \vert O^n \vert = 2^n$. These correspond to all the possible deterministic assignments of probabilities to a joint outcome of the $n$ measurements. As a matter of fact, no polytope with $2^n$ vertices and dimension \narrowmath{2^n-2} has more than $2^{2n-2}$ facets, which happens to be the number of facets of $\cP$.
		
		In greater generality, McMullen~\cite{mcmuellen} showed that any convex polytope with a given dimension $d$ and number of vertices $N$ has at most as many faces as the cyclic polytope $C(d,N)$. Let us denote the number of facets of $C(d,N)$ by $\#_{d-1} (C(d,N))$. Theorem 4 by Gale~\cite{gale} asserts that for even dimensions, $d = 2m$, this number satisfies
		\begin{equation}
			\#_{2m-1}\big( C(2m,N) \big) = {N-m \choose m} + {N-m-1 \choose m-1}.
		\end{equation}
		Hence, for $d = 2^n - 2$ and $N= 2^n$, we have the following.
		\begin{align}
			\#_{d-1}\big( C(d,2^n) \big) &= {2^{n-1}+1 \choose 2^{n-1}-1} + {2^{n-1} \choose 2^{n-1}-2} \\
			&= \frac{1}{2} \left [ \bigl(2^{n-1}+1\bigr) 2^{n-1} + 2^{n-1} \bigl( 2^{n-1} - 1 \bigr) \right ] \\
			&= 2^{2n-2}
		\end{align}
		Consequently, the number of facets of $\C(H)$ cannot exceed $\#_{d-1}(C(d,2^n)) = 2^{2n-2}$, as required. It follows that $\C(H)$ has no other facets than those it shares with $\cP$ and thus $\C(H)=\cP$.
	\end{proof}
	
	Lemmas \ref{prop:IisCE} and \ref{prop:PisC} provide a complete characterisation of how quantum and almost quantum behaviours are realized in binary $n$-Specker scenarios. On the one hand, we have $\C(H) \subseteq \Q(H) \subseteq \Q_1(H)\subseteq \CE(H)$ as a general hierarchy for all scenarios. On the other hand, for binary $n$-Specker scenarios, $\CE(H) \subseteq \cP$ and $\cP = \C(H)$, so that all the inclusions are actually equalities, concluding the proof of theorem~\ref{thm:(n,n-1)_v2}.
	
	Hence, binary $n$-Specker scenarios have the property that every quantum and almost quantum probabilistic model has a realisation in terms of noncontextual deterministic hidden variable models. That is, both quantum and almost quantum statistics satisfy the \mbox{\narrowmath{(n-1)}-sufficiency} principle for probabilistic models within binary $n$-Specker scenarios, proving theorem~\ref{thm:(n,n-1)}.
	
	As a final remark, notice that theorem~\ref{thm:(n,n-1)_v2} is not trivial, in the sense that binary $n$-Specker scenarios do admit general probabilistic models that violate consistent exclusivity~\cite{LSW}.

\end{document}